\date{}
\title{Algorithmic Detection of Jacobi Stability for Systems of Second Order Differential Equations}
\author{Christian G. B\"ohmer$^{\text{a}}$, Bo Huang$^{\text{b},}$\footnote{Corresponding author.} , Dongming Wang$^{\text{c,d}}$, and Xinyu Wang$^{\text{b}}$ \vspace{0.2cm} \\
        \it\footnotesize $^{\text{a}}$Department of Mathematics, University College London, WC1E 6BT London, United Kingdom\\
        \it\footnotesize c.boehmer@ucl.ac.uk\\
	\it\footnotesize $^{\text{b}}$LMIB -- School of Mathematical Sciences,
Beihang University, Beijing 100191, China \\
	\it\footnotesize bohuang0407@buaa.edu.cn, xinyuw@buaa.edu.cn\\
	\it\footnotesize $^{\text{c}}$LMIB -- Institute of Artificial Intelligence, Beihang University, Beijing 100191, China\\
    \it\footnotesize $^{\text{d}}$Centre National de la Recherche Scientifique, 75794 Paris Cedex 16, France\\
	\it\footnotesize Dongming.Wang@buaa.edu.cn
	}
\newtheorem {theorem*}{Theorem}
\newtheorem{theorem} {Theorem}
\newtheorem{definition}{Definition}
\newtheorem{lemma}{Lemma}
\newtheorem{remark}{Remark}
\newtheorem{open problem} {Open problem}
\numberwithin{equation}{section}
\begin{document}

\maketitle

\noindent {\bf Abstract.} {This paper introduces an algorithmic approach to the analysis of Jacobi stability of systems of second order ordinary differential equations (ODEs) via the Kosambi--Cartan--Chern (KCC) theory. We develop an efficient symbolic program using Maple for computing the second KCC invariant for systems of second order ODEs in arbitrary dimension. The program allows us to systematically analyze Jacobi stability of a system of second order ODEs by means of real solving and solution classification using symbolic computation. The effectiveness of the proposed approach is illustrated by a model of wound strings, a two-dimensional airfoil model with cubic nonlinearity in supersonic flow and a 3-DOF tractor seat-operator model. The computational results on Jacobi stability of these models are further verified by numerical simulations. Moreover, our algorithmic approach allows us to detect hand-guided computation errors in published papers.}

\smallskip

\noindent {\bf Math Subject Classification (2020).} 34C07; 68W30.

\smallskip

\noindent {\bf CCS Concepts.} {\textbf{Computing methodologies} $\rightarrow$ Symbolic and algebraic manipulation. \textbf{Mathematics of computing} $\rightarrow$ Differential equations.}

\smallskip

\noindent {\bf Keywords.} {Algorithmic approach; differential equations; Jacobi stability; KCC theory; semi-algebraic system; symbolic computation}

\section{Introduction}

Second order ordinary differential equations (ODEs) are important mathematical objects because they have a large variety of applications in different domains of mathematics, science and engineering (see \cite{AIM1993}). The modern geometry of a second order ODE was initiated in the 1920s by Synge \cite{Synge1926}, Knebelman \cite{MSK1929}, Douglas \cite{Douglas1928} and the geometric invariants of a second order ODE were obtained in the 1930s by Kosambi \cite{Kosambi1933}, Cartan \cite{Cartan1933}, and Chern \cite{Chern1939}. In their papers they have considered a system of second order ODEs
\begin{equation}\label{eq1-0}
\frac{d^2x_i}{d t^2}+2G^i\left(t,x,\frac{d x}{d t}\right)=0,\quad i\in\{1,2,\ldots,n\}
\end{equation}
where $(t,x_i)$ are the local coordinates on a real $(n+1)$-dimensional fibred manifold $\pi:\mathcal{M}\rightarrow\mathbb{R}$. The main problem they were looking for was to find the geometric properties we can associate to the system \eqref{eq1-0} that are invariant under the following groups of transformations:
\begin{equation}\label{eq1-1}
(A)
\left\{\begin{array}{l}
\tilde{t}=t,\\
\tilde{x}_i=\tilde{x}_i(x_j)\\
\end{array}\right. \quad \text{or} \quad (B)
\left\{\begin{array}{l}
\tilde{t}=t,\\
\tilde{x}_i=\tilde{x}_i(t,x_j).\\
\end{array}\right.
\end{equation}
As far as we know, the problem is not completely solved despite significant progress in the subject, see \cite{CMS1996,Olga1997,Brad1999,BChernS2000,Shen2001} and references therein. 

The geometry of the system \eqref{eq1-0} under the action of the group of transformations (A) is called the Kosambi--Cartan--Chern (KCC) theory of type (A), while the geometry of the system \eqref{eq1-0} under the group of transformations (B) is called the KCC theory of type (B). The KCC theory is established from the variational equations for the deviation of the whole trajectory to nearby ones, based on the existence of a one-to-one correspondence between a second order ODE and the geodesic equations in an associated Finsler space (see \cite{BHS2012} for more details). By associating a nonlinear connection and a Berwald-type connection (these are discussed in detail in Section \ref{sect2}) to a system of second order ODE, five geometrical invariants are obtained, with the second invariant giving the Jacobi stability of the system. The KCC theory has been applied for the study of different physical, biochemical or engineering systems (see \cite{PLA2003,SVS12005,SVS22005,BHS2012,HHLY2015,HPPS2016,GY2017} and references therein).

The main object of this work is a system of $n$ homogeneous second order ODEs, whose coefficients do not depend explicitly on time,
\begin{equation}\label{eq1-2}
\begin{split}
\frac{d^2x_i}{d t^2}+2G^i\left(\mu;x,\frac{d x}{d t}\right)=0,\quad i\in\{1,2,\ldots,n\}
\end{split}
\end{equation}
where $x=(x_1,\ldots,x_n)$ are variables, $\mu=(\mu_1,\ldots,\mu_p)$ are real parameters, and $G^i$ are rational functions in $\mathbb{R}(x,dx/dt)$. Our study is focused on describing the properties of solution trajectories of the system \eqref{eq1-2} by using the KCC theory of type (A). More concretely, the questions we answer include the following:

\begin{enumerate}
\item [1.]   Under  what conditions does a system of the form \eqref{eq1-2} have a prescribed number of Jacobi stable fixed points?
\item [2.]  Given a system of the form \eqref{eq1-2}: what can be said about the focusing/dispersing behavior of the trajectories near the fixed points?
\end{enumerate}

The investigations in the present paper are based mainly on the KCC theory and some algebraic methods with exact symbolic computation. Our answer to the first question is an effective algorithmic approach using polynomial-algebra methods to solve semi-algebraic systems, and to the second, a method based on the KCC theory for performing numerical simulations of the obtained deviation equations (combined with the solution to the first). We remark that the qualitative analysis of differential equations in practice usually involves heavy symbolic computation. In the last three decades, remarkable progress has been made on the research and development of symbolic and algebraic algorithms and software tools, bringing the classical qualitative theory of differential equations to computational approaches for symbolic analysis of the qualitative behaviors of diverse differential equations (see \cite{HLS97,WDMXBC2005,HTX15,dw91,EKW00,swek-mcs2009,vd09,adrt2016,acj17,HGR10,CCMYZ13,bhlr2015,bdetal:2020}, the survey article \cite{HNW2022}, and references therein). 

The Jacobi stability of a system of second order ODEs can be regarded as the robustness of the system to small perturbations of the whole trajectory \cite{SVS22005}. The system of differential equations describing the deviations of the whole trajectory of a system of second order ODEs, with respect to perturbed trajectories, are introduced geometrically in KCC theory via the second KCC invariant (or the deviation curvature tensor). In our recent work \cite{ISSAC2024HWY}, we provide a computational approach to the analysis of the Jacobi stability for systems of first order differential equations. The information of the Jacobian matrix of the first order equations can be used to deduce the second KCC invariant (see \cite[Lemma 3.1]{ISSAC2024HWY}). However, in practice, the evaluation of the second KCC invariant of a system of second order ODEs is generally a computationally challenging problem. Moreover, the computational complexity grows very fast when the dimension of the required systems increases. Thus, traditional hand-guided computations may lead to errors in dealing with large symbolic expressions. The aim of this paper is twofold: first, we develop an efficient symbolic program using Maple for computing the second KCC invariant for systems of the form \eqref{eq1-2} in arbitrary dimension. Second, we use symbolic computation methods (combined with the program) to provide a systematical and algorithmic approach for analyzing Jacobi stability of a system of second order ODEs \eqref{eq1-2}.

The rest of the paper is structured as follows. We explain the basic concepts and results of the KCC theory in Section \ref{sect2}. Section \ref{sect-main} contains the main results, such as Theorem \ref{kcc-deviation}, which gives the parametric formula of the deviation equations associated to system \eqref{eq1-2}; and Theorem \ref{semi-kcc}, which provides a computational approach to the conditions for a system of the form \eqref{eq1-2} to have given numbers of Jacobi stable fixed points. In Section \ref{sect4}, we demonstrate the effectiveness of our algorithmic approach using results obtained for a model of wound strings, a two-dimensional airfoil model with cubic nonlinearity in supersonic flow and a 3-DOF tractor seat-operator model. The paper ends with some discussions in Section \ref{sect5}.

The Maple program developed in this paper can be found at{\small{
{\textcolor{blue}{\underline{\url{https://github.com/Bo-Math/KCC-Jacobi}}}}}}. The numerical simulations on the focusing/dispersing behaviors of the three models together with dynamic behaviors of the corresponding deviation equations near the fixed points are given in the appendices. 


\section{KCC Theory and Jacobi Stability}\label{sect2}

In this section, we briefly recall the basics of KCC theory to be used in the sequel. For more details about the KCC theory, the reader may consult the expository article \cite{BHS2012} and other recent work in \cite{PLA2003,SVS12005,SVS22005,HPPS2016}.

Let $\mathcal{M}$ be a real, smooth $n$-dimensional manifold and let $\mathcal{TM}$ be its tangent bundle. In applications, $\mathcal{M}=\mathbb{R}^n$ or $\mathcal{M}\subset\mathbb{R}^n$ is an open set. Let $\boldsymbol{p}=(x,y)$ be a point in $\mathcal{TM}$, where $x=(x_1,\ldots,x_n)$ and $y=(y_1,\ldots,y_n)$, with $y_i=dx_i/dt$, $i=1,\ldots,n$.

Consider the following system of second order ODEs of the form \eqref{eq1-2}
\begin{equation}\label{kcc-4}
\begin{split}
\frac{d^2x_i}{d t^2}+2G^i\left(\mu;x,y\right)=0,\quad i\in\{1,2,\ldots,n\},
\end{split}
\end{equation}
where $\mu$ are real parameters, $G^i(\mu;x,y)$ are $\mathcal{C}^{\infty}$ functions locally defined in a coordinates chart on $\mathcal{TM}$, i.e., an open set around a point
$(x_0,y_0)\in\mathcal{TM}$ with initial conditions. The system \eqref{kcc-4} is inspired by the Euler--Lagrange equations of classical dynamics
\begin{equation}\label{kcc-3}
\begin{split}
\frac{d}{dt}\frac{\partial L}{\partial y_i}-\frac{\partial L}{\partial x_i}=F_i,\quad i=1,2,\ldots,n,
\end{split}
\end{equation}
where $L$ is the Lagrangian of $\mathcal{M}$, and $F_i$ are external forces. 

In order to find the basic differential invariants of system \eqref{kcc-4} under the action of the group
of transformations (A) given in \eqref{eq1-1}, we define the KCC-covariant differential of a vector field $\xi_i(\boldsymbol{x})$ on an open subset $\Omega\subset\mathbb{R}^n\times\mathbb{R}^n$ by
\begin{equation}\label{kcc-5}
\begin{split}
\frac{D\xi_i}{dt}=\frac{d\xi_i}{dt}+N_j^i\xi_j,
\end{split}
\end{equation}
where $N_j^i=\partial G^i/\partial y_j$ are the coefficients of the nonlinear connection. The original idea of this approach belongs to Kosambi \cite{Kosambi1933}, Cartan \cite{Cartan1933} (who corrected Kosambi's work), and Chern (for the most general version) \cite{Chern1939}. The Einstein summation convention will be used throughout.

Using \eqref{kcc-5}, the system \eqref{kcc-4} becomes
\begin{equation}\label{kcc-6}
\begin{split}
\frac{Dy_i}{dt}=N_j^iy_j-2G^i=-\epsilon_i,\nonumber
\end{split}
\end{equation}
where $\epsilon_i$ is a contravariant vector field on $\Omega$ and is called the first KCC invariant. It is interpreted as an external force \cite{Antonelli2000}.


We now vary the trajectories $x_i(t)$ of the system \eqref{kcc-4} into nearby ones according to
\begin{equation}\label{kcc-7}
\begin{split}
\tilde{x}_i(t)=x_i(t)+\eta\xi_i(t),
\end{split}
\end{equation}
where $|\eta|$ is a small parameter, and $\xi_i(t)$ are the components of a contravariant vector field defined along the path $x_i(t)$. Substituting \eqref{kcc-7} into \eqref{kcc-4} and taking the limit $\eta\rightarrow0$, we obtain the so-called deviation, or Jacobi, equations:
\begin{equation}\label{kcc-8}
\begin{split}
\frac{d^2\xi_i}{dt^2}+2N_j^i\frac{d\xi_j}{dt}+2\frac{\partial G^i}{\partial x_j}\xi_j=0.
\end{split}
\end{equation}

By using the KCC-covariant derivative, defined by \eqref{kcc-5}, we can rewrite equation \eqref{kcc-8} in a covariant form as
\begin{equation}\label{kcc-9}
\begin{split}
\frac{D^2\xi_i}{dt^2}=P_j^i\xi_j,
\end{split}
\end{equation}
where we have denoted
\begin{equation}\label{kcc-10}
\begin{split}
P_j^i=-2\frac{\partial G^i}{\partial x_j}-2G^{\ell}G^i_{j\ell}+y_{\ell}\frac{\partial N_j^i}{\partial x_{\ell}}+N_{\ell}^iN_j^{\ell},
\end{split}
\end{equation}
and introduced the Berwald connection $G^i_{j\ell}$, defined as
\begin{equation}\label{kcc-11}
\begin{split}
G^i_{j\ell}\equiv\frac{\partial N_j^i}{\partial y_{\ell}}.
\end{split}
\end{equation}
$P^i_j$ is called the second KCC-invariant or the deviation curvature tensor. 

Note that \eqref{kcc-9} is the Jacobi field equation whenever system \eqref{kcc-4} is formed with the geodesic equations in either Finsler or Riemannian geometry. The notion of Jacobi stability for systems of the form \eqref{kcc-4} is thus a generalization of that for the geodesics of a Riemannian or Finsler manifold. This justifies the usage of the term Jacobi stability for KCC theory.

In the KCC theory one can also introduce the third, fourth and fifth invariants of the system \eqref{kcc-4}. These invariants are called respectively the torsion tensor, the Riemann--Christoffel curvature tensor, and the Douglas tensor, which are defined as
\begin{equation}\label{kcc-12}
\begin{split}
P^i_{jk}=\frac{1}{3}\left(\frac{\partial P^i_j}{\partial y_k}-\frac{\partial P^i_k}{\partial y_j}\right),\quad P^i_{ik\ell}=\frac{\partial P^i_{jk}}{\partial y_{\ell}},\quad D^i_{jk\ell}=\frac{\partial G^i_{jk}}{\partial y_{\ell}}.\nonumber
\end{split}
\end{equation}
These tensors always exist in a Berwald space. In the KCC theory they describe the geometrical properties and interpretation of a system of second order ODEs (see \cite{PLA2003,SVS12005,SVS22005} and references therein).

For many physical, chemical or biological applications one is interested in the behaviors of the trajectories of the system \eqref{kcc-4} in a vicinity of a point $x_i(t_0)$. For simplicity in the following we take $t_0=0$. We consider the trajectories $x_i=x_i(t)$ of \eqref{kcc-4} as curves in the Euclidean space $(\mathbb{R}^n,\langle\cdot,\cdot\rangle)$, where $\langle\cdot,\cdot\rangle$ represents the canonical inner product of $\mathbb{R}^n$, and let us impose to the deviation vector $\xi(t)$ the initial conditions
\begin{equation}\label{kcc-deviation-xi}
\xi(0)=O\in\mathbb{R}^n, \quad \dot{\xi}(0)=W\neq O,
\end{equation}
where $O$ is the null vector.

For any two vectors $X,Y\in\mathbb{R}^n$ we define an adapted inner product $\langle\langle\cdot,\cdot\rangle\rangle$ to the deviation tensor $\xi$ by
\begin{equation}\label{kcc-deviation-vector}
\langle\langle X,Y\rangle\rangle:=\frac{1}{\langle W,W\rangle}\cdot\langle X,Y\rangle.
\end{equation}
Obviously, $||W||^2:=\langle\langle W,W\rangle\rangle=1$. Thus, the focusing tendency of the trajectories around $t_0=0$ can be described as follows:
\begin{itemize}
  \item trajectories are bunching together if $||\xi(t)||^2<t^2$, $t\approx0^+$,
  \item trajectories are dispersing if $||\xi(t)||^2>t^2$, $t\approx0^+$.
\end{itemize}
The focusing behavior of the trajectories near the origin is shown in Figure \ref{fig:KCC-xi}. See \cite{BChernS2000} for a detailed explanation of the Finslerian geodesics' behavior.

The next result shows that the focusing/dispersing behavior of the trajectories of a second order ODE can also be described from a geometric point of view by using the properties of the deviation curvature tensor. A short proof of it can be found in \cite{SVS22005}.
\begin{lemma}\label{kcc-focusing}
Consider system \eqref{kcc-4} and assume that the deviation vector $\xi(t)$ obeys the set of initial conditions in \eqref{kcc-deviation-xi}. Then, for $t\approx0^+$, the trajectories of \eqref{kcc-4} are
\begin{itemize}
  \item bunching together if and only if the real part of the eigenvalues of $P_j^{i}(0)$ are strictly negative,
  \item dispersing if and only if the real part of the eigenvalues of $P_j^{i}(0)$ are strictly positive.
\end{itemize}
\end{lemma}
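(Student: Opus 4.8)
The plan is to decide the position of $\|\xi(t)\|^2$ relative to $t^2$ for $t\approx 0^+$ by Taylor-expanding $f(t):=\|\xi(t)\|^2=\langle\langle\xi(t),\xi(t)\rangle\rangle$ at $t=0$ and reading off the first coefficient beyond $t^2$ that is sensitive to $P^i_j(0)$. First I would record the low-order data from the initial conditions \eqref{kcc-deviation-xi}: since $\xi(0)=O$ we get $f(0)=0$ and $f'(0)=2\langle\langle\xi(0),\dot\xi(0)\rangle\rangle=0$, while $f''(0)=2\langle\langle\dot\xi(0),\dot\xi(0)\rangle\rangle=2\langle\langle W,W\rangle\rangle=2$ by the normalization \eqref{kcc-deviation-vector}. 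Hence $f(t)=t^2+O(t^3)$, which is exactly the $t^2$ reference appearing in the definition of bunching/dispersing, so everything is decided by the first higher-order term of $f$.

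For that term I would pass to the covariant deviation equation \eqref{kcc-9}, $D^2\xi_i/dt^2=P^i_j\xi_j$, and approximate $\xi$ near $t=0$ by the solution of the constant-coefficient model obtained by freezing the deviation curvature tensor at its base-point value $P^i_j(0)$ (and taking $D/dt\approx d/dt$ there); this is essentially the short argument of \cite{SVS22005}. With $\xi(0)=O$ and $\dot\xi(0)=W$ the model solution is $\xi_i(t)=t\,W_i+\tfrac{t^3}{6}\,P^i_j(0)\,W_j+O(t^5)$. Substituting into $f$ and using \eqref{kcc-deviation-vector}, and noting that $\langle\xi(t),\xi(t)\rangle$ is even in $t$ so no $t^3$ term survives, I would obtain $f(t)=t^2+\tfrac{t^4}{3}\,\dfrac{P^i_j(0)\,W_iW_j}{\langle W,W\rangle}+O(t^6)$. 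Therefore, for $t\approx 0^+$, the sign of $\|\xi(t)\|^2-t^2$ coincides with the sign of the quadratic form $W\mapsto P^i_j(0)\,W_iW_j$.

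It then remains to convert this into the eigenvalue criterion in the statement. In the geometric setting the deviation curvature tensor $P^i_j(0)$ is diagonalizable with real spectrum (equivalently, one may replace it by its symmetric part, which does not change $P^i_j(0)W_iW_j$), and for such a matrix the quadratic form is negative for every admissible $W\neq O$ precisely when all eigenvalues of $P^i_j(0)$ are strictly negative, and positive for every such $W$ precisely when all eigenvalues are strictly positive. Combining this with the dichotomy from the second paragraph yields the two claimed equivalences.

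\noindent\textbf{Main obstacle.} The crux is this last translation. For a genuinely non-symmetric $P^i_j(0)$, the condition ``$P^i_j(0)W_iW_j$ has a fixed sign for all $W$'' is strictly stronger than ``all eigenvalues of $P^i_j(0)$ have that sign'', so the argument relies on the diagonalizability/symmetry of the deviation curvature tensor inherited from the underlying Finsler or Riemannian structure, or on restricting $W$ to range over an eigenbasis; making this precise is the hard part. A secondary, more routine difficulty is justifying the reduction of \eqref{kcc-9} to its frozen constant-coefficient model to the order $t^4$ actually needed: one must check that the variation $\dot P^i_j(0)$ only contributes at order $t^5$ in $f$ and that the nonlinear connection entering $D/dt$ does not contaminate the order-$t^3$ coefficient of $f$ (automatic in the situations treated here, where the base point is chosen so that this lower-order term drops), which is precisely why one works with the covariant form \eqref{kcc-9} rather than the bare equation \eqref{kcc-8}.
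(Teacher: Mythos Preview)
The paper does not supply its own proof of this lemma; it simply states ``A short proof of it can be found in \cite{SVS22005}.'' Your Taylor-expansion argument, yielding $\|\xi(t)\|^2=t^2+\tfrac{t^4}{3}\,\langle\langle W,P(0)W\rangle\rangle+O(t^5)$ and then reading off the sign of the quadratic form, is exactly the short argument in that reference, so in that sense your proposal matches what the paper invokes.

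Your ``main obstacle'' is not a defect of your write-up but a genuine issue with the lemma as stated. The Taylor computation shows that bunching (for \emph{every} admissible $W$) is equivalent to negative definiteness of the \emph{symmetric part} of $P^i_j(0)$, whereas the lemma asserts equivalence with ``all eigenvalues of $P^i_j(0)$ have strictly negative real part.'' For a non-symmetric $P$ these two conditions are inequivalent in both directions: a stable matrix can have an indefinite symmetric part (e.g.\ $\bigl(\begin{smallmatrix}-1&10\\0&-1\end{smallmatrix}\bigr)$), and conversely. The cited proof in \cite{SVS22005} glosses over this point in the same way; the equivalence is only clean when $P^i_j(0)$ is symmetric (or diagonalizable with real spectrum in an orthonormal frame), which holds in the genuine Riemannian/Finsler geodesic setting but not for the general second-order systems \eqref{kcc-4} treated in this paper. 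So you have correctly reproduced the intended argument and correctly flagged its limitation; there is nothing further the paper itself provides to resolve it.
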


\begin{figure}[h]
\begin{minipage}{0.49\linewidth}
\centering
\includegraphics[width=1\linewidth]{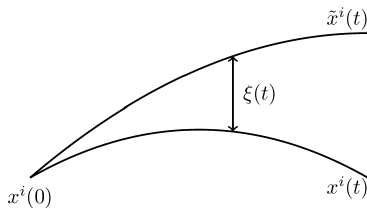}
    \caption*{(a) $||\xi(t)||^2<t^2,t\approx0^+$.}
\end{minipage}
\begin{minipage}{0.49\linewidth}
\centering
\includegraphics[width=1\linewidth]{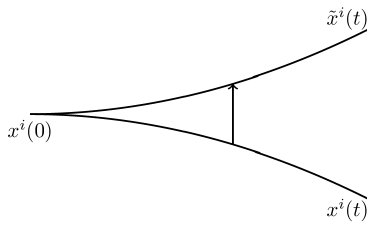}
    \caption*{(b) $||\xi(t)||^2>t^2,t\approx0^+$.}
\end{minipage}
\caption{Behavior of trajectories near zero.}
    \label{fig:KCC-xi}
\end{figure}

Now we can define the concept of Jacobi stability for a second order ODE as follows. This kind of stability refers to the focusing tendency (in a small vicinity of $t_0$) of trajectories of \eqref{kcc-4} with respect to a variation \eqref{kcc-7} that satisfies the condition
\[||x_i(t_0)-\tilde{x}_i(t_0)||=0,\quad ||\dot{x}_i(t_0)-\dot{\tilde{x}}_i(t_0)||\neq0.\]

\begin{definition}\label{kcc-def}
The trajectories of \eqref{kcc-4} are said to be Jacobi stable at $(\boldsymbol{x}(t_0),\dot{\boldsymbol{x}}(t_0))$ if the real parts of the eigenvalues of the deviation curvature tensor $P_j^i|_{t_0}$ are strictly negative, or Jacobi unstable, otherwise.
\end{definition}

We note that Jacobi instability via equation \eqref{kcc-9} means exponential growth of the deviation vector $\xi(t)$, while Jacobi stability means that $\xi(t)$ has an oscillatory variation expressed by a linear combination of trigonometric functions sin and cos. More detailed discussions of the KCC theory, including applications, can be found in \cite{PLA2003,BHS2012,HHLY2015,HPPS2016}.

\section{Main Algorithmic Results}\label{sect-main}

According to the KCC theory described in Section \ref{sect2}, it is necessary to take the following steps to study Jacobi stability of a system of second order ODEs \eqref{eq1-2}.
\vspace{0.1cm}

\noindent{\bf Step 1}. Write a system of second order ODEs in the standard form of the KCC \eqref{eq1-2} (or \eqref{kcc-4}).


\vspace{0.1cm}

\noindent{\bf Step 2}. Compute the deviation curvature tensor $P_j^i$ using \eqref{kcc-10} and derive the symbolic expression of the deviation equations by \eqref{kcc-8}.

\vspace{0.1cm}

\noindent{\bf Step 3}. Determine the sign of the real parts of the eigenvalues of $(P_j^i)$ at each fixed point of the system and study the dynamic behavior of the deviation equations near the fixed points.

\vspace{0.1cm}

Usually, given a system of second order ODEs, {\bf Step 1} can be done by solving a linear system of equations with respect to the variables $(d^2x_1/dt^2,\ldots,d^2x_n/dt^2)$. {\bf Step 2} is mainly based on the formulas \eqref{kcc-8} and \eqref{kcc-10}. In {\bf Step 3}, we will provide a computational approach for detecting Jacobi stability of a system via $(P_j^i)$ at a fixed point. We remark that the study on the dynamic behavior of the deviation equations is performed by numerical simulations (combined with our computational approach and the exact expression of the deviation equations derived in {\bf Step 2}). Next, we divide this section into two parts to describe our main results.

\subsection{Algorithms for Step 2}\label{sect3-1}

This subsection is devoted to providing algorithms to compute the deviation curvature tensor $P_j^i$ and exact expression of the deviation equations. In the following we will present algorithms to implement {\bf Step 2}. We use ``Maple-like'' pseudo-code, based on our Maple implementation.

The algorithm {\bf KCC-invariant}, presented below, is based on \eqref{kcc-10} that can be used to derive the deviation curvature tensor $P_j^i$ for a given system of second order ODEs.

\begin{small}
\begin{algorithm}[H]
\caption{{\bf KCC-invariant}$(G)$}
\hspace*{0.02in} {\bf Input:}
a set of equations $G^i$ in a system of second order ODEs \eqref{eq1-2}\\
\hspace*{0.02in} {\bf Output:}
the deviation curvature tensor $(P_j^i)$
\begin{algorithmic}[1]
\State $n:=\mbox{nop}(G)$;

\For{$i$ {\bf from} 1 {\bf to} $n$}
\For{$j$ {\bf from} 1 {\bf to} $n$}
\For{$\ell$ {\bf from} 1 {\bf to} $n$}
\State $N_j^i:={\partial G^i}/{\partial y_j}$;
\State $G_{j,\ell}^i:={\partial N_j^i}/{\partial y_{\ell}}$;

\EndFor
\EndFor
\EndFor

\For{$i$ {\bf from} 1 {\bf to} $n$}
\For{$j$ {\bf from} 1 {\bf to} $n$}

\State $P_j^i:=-2\cdot\partial G^i/\partial x_j$;

\For{$\ell$ {\bf from} 1 {\bf to} $n$}
\State $P_j^i:=P_j^i+y_{\ell}\cdot\frac{\partial N_j^i}{x_{\ell}}+N_{\ell}^i\cdot N_j^{\ell}-2\cdot G^{\ell}\cdot G_{j,\ell}^i$;

\EndFor
\EndFor
\EndFor

\State \Return $(P_j^i)$;
\end{algorithmic}
\end{algorithm}
\end{small}


The following theorem gives the symbolic expression of the deviation equations \eqref{kcc-8} near a fixed point $\bar{\boldsymbol{x}}$. It also provides a way to deduce the exact solution of the obtained deviation equations.

\begin{theorem}\label{kcc-deviation}
Let $\boldsymbol{\xi}=(\xi_1,\dots,\xi_n)^\mathrm{T}$ be the deviation vector. Then the deviation equations associated to system \eqref{eq1-2} near a fixed point $\bar{\boldsymbol{x}}$ can be written as 
\begin{equation}\label{KCC-xi-2}
    \ddot{\boldsymbol{\xi}}(t)=
    \begin{pmatrix}
       \boldsymbol{A}_{21} & \boldsymbol{A}_{22} 
    \end{pmatrix}
    \begin{pmatrix}
    \boldsymbol{\xi}(t)\\
    \dot{\boldsymbol{\xi}}(t)
    \end{pmatrix},
\end{equation}
where
\begin{equation*}
\begin{split}
\boldsymbol{A}_{21}&=
    \begin{pmatrix}
     -2\frac{\partial G^1}{\partial x_1}(\mu,\bar{\boldsymbol{x}}) & \dots & -2\frac{\partial G^1}{\partial x_n}(\mu,\bar{\boldsymbol{x}})  \\
     \vdots & \ddots & \vdots   \\
     -2\frac{\partial G^n}{\partial x_1}(\mu,\bar{\boldsymbol{x}}) & \dots & -2\frac{\partial G^n}{\partial x_n}(\mu,\bar{\boldsymbol{x}})  \\
    \end{pmatrix},\\
    \boldsymbol{A}_{22}&=
    \begin{pmatrix}
-2N_1^1(\mu,\bar{\boldsymbol{x}}) & \dots & -2N_n^1(\mu,\bar{\boldsymbol{x}}) \\
     \vdots & \ddots & \vdots   \\
     -2N_1^n(\mu,\bar{\boldsymbol{x}}) & \dots & -2N_n^n(\mu,\bar{\boldsymbol{x}}) \\
    \end{pmatrix}.
\end{split}
\end{equation*}
Moreover, the exact solution of $\boldsymbol{\xi}(t)$ under the initial conditions $\boldsymbol{\xi}(0)=(0,\dots,0)^\mathrm{T}, \dot{\boldsymbol{\xi}}(0)=(\xi_{10},\dots,\xi_{n0})^\mathrm{T}$ can be derived by the formula 
\begin{equation}\label{kcc-xi-3}
    \begin{pmatrix}
        \boldsymbol{\xi}(t)\\
        \dot{\boldsymbol{\xi}}(t)
    \end{pmatrix}=
\left(\sum^\infty_{k=0}\frac{(\boldsymbol{A}t)^k}{k!}\right)
\begin{pmatrix}
        \boldsymbol{\xi}(0)\\
        \dot{\boldsymbol{\xi}}(0)
    \end{pmatrix}=\mathrm{exp}(\boldsymbol{A}t)
    \begin{pmatrix}
        \boldsymbol{\xi}(0)\\
        \dot{\boldsymbol{\xi}}(0)
    \end{pmatrix},
\end{equation}
where \begin{equation*}
    \boldsymbol{A}=
    \begin{pmatrix}
     \boldsymbol{0} & \boldsymbol{E} \\
    \boldsymbol{A}_{21} & \boldsymbol{A}_{22}
    \end{pmatrix}.
\end{equation*}
\end{theorem}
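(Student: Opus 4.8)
Start from the deviation equations \eqref{kcc-8}. Recall that these are obtained by inserting the variation \eqref{kcc-7}, i.e.\ $\tilde x_i=x_i+\eta\xi_i$ (so that $\tilde y_i=y_i+\eta\dot\xi_i$), into \eqref{kcc-4}, Taylor-expanding $G^i(\mu;\tilde x,\tilde y)$ to first order in $\eta$, and collecting the $O(\eta)$ term, which yields $\ddot\xi_i+2(\partial G^i/\partial y_j)\dot\xi_j+2(\partial G^i/\partial x_j)\xi_j=0$ with the partial derivatives evaluated along the base trajectory $x_i(t)$. I would then specialize the base trajectory to the equilibrium solution $x_i(t)\equiv\bar x_i$: being a fixed point of \eqref{eq1-2} means precisely that this constant is a solution, so $G^i(\mu;\bar{\boldsymbol{x}},\boldsymbol{0})=0$ and the velocity vanishes identically; consequently $N_j^i=\partial G^i/\partial y_j$ and $\partial G^i/\partial x_j$ become the \emph{constant} matrices $N_j^i(\mu,\bar{\boldsymbol{x}})$ and $(\partial G^i/\partial x_j)(\mu,\bar{\boldsymbol{x}})$ (all evaluated at $y=\boldsymbol{0}$). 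Writing \eqref{kcc-8} in vector notation and solving for $\ddot{\boldsymbol{\xi}}$ then reads off $\boldsymbol{A}_{21}=-2(\partial G^i/\partial x_j)(\mu,\bar{\boldsymbol{x}})$ and $\boldsymbol{A}_{22}=-2N_j^i(\mu,\bar{\boldsymbol{x}})$, which is exactly \eqref{KCC-xi-2}.

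For the solution formula \eqref{kcc-xi-3} I would introduce $\boldsymbol{z}(t)=(\boldsymbol{\xi}(t)^{\mathrm T},\dot{\boldsymbol{\xi}}(t)^{\mathrm T})^{\mathrm T}\in\mathbb{R}^{2n}$. The tautology $\dot{\boldsymbol{\xi}}=\boldsymbol{E}\,\dot{\boldsymbol{\xi}}$ together with \eqref{KCC-xi-2} gives the first-order constant-coefficient system $\dot{\boldsymbol{z}}=\boldsymbol{A}\boldsymbol{z}$ with $\boldsymbol{A}$ the block matrix in the statement. Since $\boldsymbol{A}$ is a fixed matrix with $\lVert\boldsymbol{A}\rVert<\infty$, the series $\sum_{k\ge0}(\boldsymbol{A}t)^k/k!$ converges absolutely and uniformly on every compact $t$-interval, may be differentiated term by term to give $\boldsymbol{A}\exp(\boldsymbol{A}t)$, so $t\mapsto\exp(\boldsymbol{A}t)\,\boldsymbol{z}(0)$ solves $\dot{\boldsymbol{z}}=\boldsymbol{A}\boldsymbol{z}$ with the prescribed initial value; the right-hand side being linear, hence globally Lipschitz, the Picard--Lindel\"of theorem shows this is the unique solution. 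Substituting the initial data of the theorem, $\boldsymbol{z}(0)=(0,\dots,0,\xi_{10},\dots,\xi_{n0})^{\mathrm T}$, yields \eqref{kcc-xi-3}.

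I do not expect a genuine obstacle: the content is (i) careful bookkeeping to identify the coefficient matrices and (ii) quoting the standard existence--uniqueness and matrix-exponential facts for linear systems. The one point that needs a little care is the passage from \eqref{kcc-8}, whose coefficients a priori depend on the base trajectory $(x(t),\dot x(t))$, to a system with \emph{constant} coefficients; this is legitimate exactly because ``near a fixed point'' means linearizing about the constant equilibrium solution, so the hypothesis $G^i(\mu;\bar{\boldsymbol{x}},\boldsymbol{0})=0$ is what makes the velocity --- and therefore the $y$-dependence of the coefficients --- drop out.
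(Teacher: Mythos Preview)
Your proposal is correct and follows essentially the same approach as the paper: rewrite the deviation equations \eqref{kcc-8} at the fixed point in the matrix form \eqref{KCC-xi-2}, then introduce the phase-space variable $\boldsymbol{z}=(\boldsymbol{\xi},\dot{\boldsymbol{\xi}})^{\mathrm T}$ to reduce to the constant-coefficient first-order system $\dot{\boldsymbol{z}}=\boldsymbol{A}\boldsymbol{z}$ and invoke the matrix exponential. If anything, you are slightly more explicit than the paper in justifying why the coefficients become constant at the equilibrium and in citing Picard--Lindel\"of for uniqueness, but the underlying argument is the same.
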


\begin{proof}
Note that the equations of deviation direction \eqref{kcc-8} near the fixed point $\bar{\boldsymbol{x}}$ can be rewritten in the following matrix form
\begin{equation}\label{KCC-xi-4}
    \ddot{\boldsymbol{\xi}}(t)=
    \begin{pmatrix}
       \boldsymbol{A}_{21} & \boldsymbol{A}_{22} 
    \end{pmatrix}
    \begin{pmatrix}
    \boldsymbol{\xi}(t)\\
    \dot{\boldsymbol{\xi}}(t)
    \end{pmatrix},
\end{equation}
where
\begin{equation}\label{KCC-xi-5}
    \boldsymbol{A}_{21}=
    \left.\begin{pmatrix}
     -2\frac{\partial G^1}{\partial x_1} & \dots & -2\frac{\partial G^1}{\partial x_n}  \\
     \vdots & \ddots & \vdots   \\
     -2\frac{\partial G^n}{\partial x_1} & \dots & -2\frac{\partial G^n}{\partial x_n}  \\
    \end{pmatrix}\right|_{\bar{x}},
    \quad
    \boldsymbol{A}_{22}=
    \left.\begin{pmatrix}
     -2N_1^1 & \dots & -2N_n^1 \\
     \vdots & \ddots & \vdots   \\
     -2N_1^n & \dots & -2N_n^n \\
    \end{pmatrix}\right|_{\bar{x}}.\nonumber
\end{equation}
By taking the notation
\begin{equation*}
  \begin{split}
        \dot{\boldsymbol{\xi}}(t) = \boldsymbol{\eta}(t), \quad \ddot{\boldsymbol{\xi}}(t) =  \dot{\boldsymbol{\eta}}(t),  
\end{split}
\end{equation*}
we see that system \eqref{KCC-xi-4} is equivalent to the following system of first order equations
\begin{equation}\label{KCC-xi-6}
    \begin{pmatrix}
    \dot{\boldsymbol{\xi}}(t)\\
    \dot{\boldsymbol{\eta}}(t)
    \end{pmatrix}=\boldsymbol{A}
    \begin{pmatrix}
    \boldsymbol{\xi}(t)\\
    \boldsymbol{\eta}(t)
    \end{pmatrix},
\end{equation}
where
\begin{equation}\label{KCC-xi-7}
    \boldsymbol{A}=
    \begin{pmatrix}
     \boldsymbol{0} & \boldsymbol{E} \\
     \boldsymbol{A}_{21} & \boldsymbol{A}_{22}
    \end{pmatrix}.
\end{equation}
Here $\boldsymbol{0}$ is $n\times n$ zero matrix, and $\boldsymbol{E}$ is $n\times n$ identity matrix. Note that \eqref{KCC-xi-6} is a first order differential system with constant coefficient, so that the solution of \eqref{KCC-xi-4} under initial conditions $\boldsymbol{\xi}(0)=(0,\dots,0)^\mathrm{T},\boldsymbol{\eta}(0)=\dot{\boldsymbol{\xi}}(0)=(\xi_{10},\dots,\xi_{n0})^\mathrm{T}$ can be written as the formula \eqref{kcc-xi-3}.
\end{proof}

The following algorithm {\bf KCC-deviation} is based on Theorem \ref{kcc-deviation}, which provides a straightforward calculation method to derive the exact expression of the deviation equations for a considered system of second order ODEs.

\begin{small}
\begin{algorithm}[H]
\caption{{\bf KCC-deviation}$(G)$}
\hspace*{0.02in} {\bf Input:}
a set of equations $G^i$ in a system of second order ODEs \eqref{eq1-2}\\
\hspace*{0.02in} {\bf Output:}
a list of the deviation equations $\xi_i$ in \eqref{kcc-8}
\begin{algorithmic}[1]
\State $n:=\mbox{nop}(G)$;

\For{$i$ {\bf from} 1 {\bf to} $n$}
\For{$j$ {\bf from} 1 {\bf to} $n$}

\State $N_j^i:={\partial G^i}/{\partial y_j}$;

\EndFor
\EndFor

\State $\boldsymbol{\xi}(t)=(\xi_1(t),\ldots,\xi_n(t))^T$;
\State $\dot{\boldsymbol{\xi}}(t)=\left(\mbox{Diff}(\xi_1,t),\ldots,\mbox{Diff}(\xi_n,t)\right)^T$;
\State $\ddot{\boldsymbol{\xi}}(t):=\left(\mbox{Diff}(\xi_1,t\$2),\ldots,\mbox{Diff}(\xi_n,t\$2)\right)^T$;

\State $\mbox{Jac}:=\mbox{Jacobian}(G, [x_1,\ldots,x_n])$;

\State $\Xi(t):=\ddot{\boldsymbol{\xi}}(t)+2\cdot \mbox{Multiply}\left(\big(N_j^i\big),\dot{\boldsymbol{\xi}}(t)\right)+2\cdot \mbox{Multiply}\left(\mbox{Jac},\boldsymbol{\xi}(t)\right)$;

\State \Return $\Xi(t)$;
\end{algorithmic}
\end{algorithm}
\end{small}




\begin{remark}
The termination of the two algorithms is obvious. The correctness of algorithm 1 follows from the formula $P_j^i$ in \eqref{kcc-10}, while the correctness of algorithm 2 follows directly from Theorem \ref{kcc-deviation}. We implemented the above two algorithms in Maple. The Maple program can be found at{\small{
{\textcolor{blue}{\underline{\url{https://github.com/Bo-Math/KCC-Jacobi}}}}}}. Both algorithms are based on standard symbolic tools and hence will run efficiently. In Section \ref{sect4}, we will apply our Maple program to analyze Jacobi stability for several concrete systems of second order ODEs in order to show its feasibility.

\end{remark}

\subsection{Algorithmic Derivation for Step 3}\label{sect3-2}

Our purpose is to derive necessary and sufficient conditions on the parameters $\mu$ for systems of the form \eqref{eq1-2} to have given numbers of Jacobi stable fixed points. This can be done by using the stability criterion of Routh--Hurwitz \cite{RMAM1982,PLMT1985}. Let
\begin{equation}\label{kcc-main-JJ}
\begin{split}
p(\lambda)=\lambda^n+a_1(\mu,\bar{x})\lambda^{n-1}+\cdots+a_n(\mu,\bar{x})
\end{split}
\end{equation}
be the characteristic polynomial of the deviation curvature tensor $(P_j^i)$ at a point $\bar{x}=(x,y)$. The Routh--Hurwitz criterion reduces the problem of determining the negative signs of the real parts of the eigenvalues of $(P_j^i)$ to the problem of determining the signs of certain coefficients $a_i$ of $p(\lambda)$ and the signs of certain determinants $\Delta_j$ of matrices with $a_i$ or 0 as entries.

The necessary and sufficient conditions for $p(\lambda)$ to have all solutions such that ${\rm{Re}}(\lambda)<0$ can be written as
\begin{equation}\label{kcc-main-12}
\begin{split}
&a_n>0,\quad \Delta_1=a_1>0,\quad \Delta_2=\left|
 \begin{matrix}
    a_1& a_3 \\
   1 & a_2 \\
  \end{matrix}
  \right|>0,\\
 &\Delta_3=\left|
 \begin{matrix}
    a_1& a_3 & a_5\\
   1 & a_2 & a_4 \\
   0 & a_1 & a_3 \\
  \end{matrix}
  \right|>0,\ldots,\\
  &\Delta_j=\left|
 \begin{matrix}
    a_1& a_3 & \cdots & \cdots\\
   1 & a_2 & a_4 & \cdots\\
   0 & a_1 & a_3 & \cdots\\
   0 & 1 &a_2 & \cdots\\
   \cdots & \cdots & \cdots & \cdots\\
   0 & 0 & \cdots & a_j\\
  \end{matrix}
  \right|>0,\ldots,\\
\end{split}
\end{equation}
for all $j=1,2,\ldots,n$. Here $\Delta_1,\ldots,\Delta_n$ are known as the \textit{Hurwitz determinants} of $p(\lambda)$.

The following result gives necessary and sufficient conditions for a given system of the form \eqref{eq1-2} to have a prescribed number of Jacobi stable fixed points.

\begin{theorem}\label{semi-kcc}
For a $n$-dimensional system of second order ODEs \eqref{eq1-2}, the system has exactly $k$ Jacobi stable fixed points if and only if the following semi-algebraic system
\begin{equation}\label{kcc-main-13}
\left\{\begin{array}{l}
G^{1,1}(\mu,x,0)=G^{2,1}(\mu,x,0)
=\cdots=G^{n,1}(\mu,x,0)=0,\\ 
G^{i,2}(\mu,x,0)\neq0,\quad i=1,\ldots,n,\\
a_{n,1}(\mu,x,0)\cdot a_{n,2}(\mu,x,0)>0,\\
\Delta_{j,1}(\mu,x,0)\cdot\Delta_{j,2}(\mu,x,0)>0,\quad j=1,\ldots,n\\
\end{array}\right.
\end{equation}
has exactly $k$ distinct real solutions with respect to the variable $x=(x_1,\ldots,x_n)$. Here $G^{i,1}$ and $G^{i,2}$, $a_{n,1}$ and $a_{n,2}$, $\Delta_{j,1}$ and $\Delta_{j,2}$ are respectively the numerator and denominator of the function $G^i$, $a_{n}$, and $\Delta_{j}$, with $\mu=(\mu_1,\ldots,\mu_p)$ are parameters appearing in \eqref{eq1-2}.
\end{theorem}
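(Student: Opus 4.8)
The plan is to translate the statement ``the system has exactly $k$ Jacobi stable fixed points'' into a condition on the number of real solutions of a semi-algebraic system, and to do so by unwinding the three defining requirements in turn: (i) being a fixed point, (ii) being a genuine fixed point with nonzero denominators so that the KCC data are well-defined there, and (iii) Jacobi stability in the sense of Definition \ref{kcc-def}. First I would recall that $\bar{\boldsymbol{x}}$ is a fixed point of \eqref{eq1-2} precisely when $\dot{x}_i\equiv 0$ and $\ddot{x}_i\equiv 0$ there, i.e. when $G^i(\mu,x,0)=0$ for all $i$; clearing denominators, this is equivalent to $G^{i,1}(\mu,x,0)=0$ together with $G^{i,2}(\mu,x,0)\neq0$, which accounts for the first two lines of \eqref{kcc-main-13}. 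The inequality $G^{i,2}(\mu,x,0)\neq 0$ is not merely cosmetic: it guarantees that $G^i$ and all the derived quantities $N^i_j$, $G^i_{j\ell}$, $P^i_j$ are regular at $\bar{\boldsymbol{x}}$, so that the Routh--Hurwitz data below are meaningful rational functions evaluated at a point in their domain.

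Next I would invoke Definition \ref{kcc-def}: the fixed point is Jacobi stable iff every eigenvalue of $(P^i_j)|_{\bar{\boldsymbol{x}}}$ has strictly negative real part, i.e. iff the characteristic polynomial $p(\lambda)$ in \eqref{kcc-main-JJ} is Hurwitz. By the Routh--Hurwitz criterion \eqref{kcc-main-12}, this holds iff $a_n(\mu,\bar{x})>0$ and $\Delta_j(\mu,\bar{x})>0$ for $j=1,\dots,n$. The only remaining point is to rewrite each such sign condition on a rational function $f=f_1/f_2$ as a polynomial condition: since at a fixed point $f_2\neq 0$, we have $f>0 \iff f_1 f_2 > 0$. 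Applying this with $f=a_n$ and $f=\Delta_j$ yields exactly the third and fourth lines of \eqref{kcc-main-13}. (One should note here that the coefficients $a_i$ of $p(\lambda)$, being symmetric functions of the entries of $P^i_j$, are themselves rational in $(\mu,x)$ with denominators dividing powers of the $G^{i,2}$; hence so are the $\Delta_j$, and writing $\Delta_j=\Delta_{j,1}/\Delta_{j,2}$ is legitimate with $\Delta_{j,2}\neq 0$ on the fixed-point locus. I would state this as the content justifying the notation in the theorem rather than belabor the bookkeeping.)

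Finally I would assemble the equivalence. A point $\bar{x}=(x,0)$ satisfies all the equations and inequalities of \eqref{kcc-main-13} if and only if it is a fixed point of \eqref{eq1-2} (first two lines) at which $p(\lambda)$ is Hurwitz (last two lines), i.e. if and only if it is a Jacobi stable fixed point. Distinct real solutions $x$ of \eqref{kcc-main-13} are therefore in bijection with Jacobi stable fixed points of the system, so the system has exactly $k$ of them iff \eqref{kcc-main-13} has exactly $k$ distinct real solutions in $x$. The main obstacle, such as it is, is not conceptual but notational: one must check carefully that the passage from the rational sign conditions $a_n>0$, $\Delta_j>0$ to the polynomial product conditions $a_{n,1}a_{n,2}>0$, $\Delta_{j,1}\Delta_{j,2}>0$ is valid, which hinges precisely on the nonvanishing of the relevant denominators at fixed points — a fact secured by the inequality $G^{i,2}(\mu,x,0)\neq0$ already imposed. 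I would also remark that the reduction makes the problem amenable to algorithmic real solution classification (real root counting for parametric semi-algebraic systems), which is the point of stating it in this form.
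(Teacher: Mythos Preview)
Your proposal is correct and follows essentially the same approach as the paper: identify fixed points via $\dot{x}_i=0$ (hence $G^i(\mu,x,0)=0$, cleared to $G^{i,1}=0$, $G^{i,2}\neq0$) and then apply the Routh--Hurwitz criterion from Definition~\ref{kcc-def}. In fact you supply more detail than the paper, which simply asserts that the result ``follows directly from Routh--Hurwitz criterion'' after noting $\dot{x}_1=\cdots=\dot{x}_n=0$; your explicit justification of the passage from $a_n>0$, $\Delta_j>0$ to the polynomial product conditions $a_{n,1}a_{n,2}>0$, $\Delta_{j,1}\Delta_{j,2}>0$ is a welcome clarification.
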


\begin{proof}
Let $\bar{x}=(x_i,dx_i/dt)$ be the fixed point of \eqref{eq1-2}. Note that at the point we have $\dot{x}_1=\cdots=\dot{x}_n=0$. The desired result follows directly from Routh--Hurwitz criterion.
\end{proof}



The main task of Theorem \ref{semi-kcc} is to find conditions on the parameters $\mu$ for system \eqref{kcc-main-13} to have exactly $k$ distinct real solutions. Now we describe the main steps for automatically analyzing Theorem \ref{semi-kcc} by using methods from symbolic computation.
\vspace{0.1cm}

{\bf Step 1}. Formulate the semi-algebraic system \eqref{kcc-main-13} from a system of second order ODEs \eqref{eq1-2} by using Algorithm 1 introduced in Section \ref{sect3-1}. Denote by $\mathcal{S}$ the semi-algebraic system for solving, $\Psi$ the set of inequalities of $\mathcal{S}$, $\mathcal{F}$ the set of polynomials in $\Psi$, and $\mathcal{P}$ the set of polynomials in the equations of $\mathcal{S}$.
\vspace{0.1cm}

{\bf Step 2}. Triangularize the set $\mathcal{P}$ of polynomials to obtain one or several (regular) triangular sets $\mathcal{T}_{k}$ by using the method of triangular decomposition or
Gr\"obner bases.

\vspace{0.1cm}

{\bf Step 3}. For each triangular set $\mathcal{T}_{k}$, use the polynomial set $\mathcal{F}$ to compute an algebraic variety $V$ in $\mu$ by means of real solution classification (e.g., Yang--Xia's method \cite{YHX01,LYBX05} or Lazard--Rouillier's method
\cite{DLFR}), which decomposes the parameter space $\mathbb{R}^{p}$ into finitely many cells such that in each cell the number of real zeros of $\mathcal{T}_{k}$ and the signs of polynomials in $\mathcal{F}$ at these real zeros remain invariant. The algebraic variety is defined by polynomials in $\mu$. Then take a rational sample point from each cell by using the method of PCAD or critical points \cite{msed2007}, and isolate the real zeros of $\mathcal{T}_{k}$ by rational intervals at this sample point. In this way, the number of real zeros of $\mathcal{T}_{k}$ and the signs of polynomials in $\mathcal{F}$ at these real zeros in each cell are determined.
\vspace{0.1cm}

{\bf Step 4}. Determine the signs of (the factors of) the defining polynomials of $V$ at each sample point. Formulate the conditions on $\mu$ according to the signs of these defining polynomials at the sample points in those cells in which the system $S$ has exactly the number of real solutions we want.

\vspace{0.1cm}

{\bf Step 5}. Output the conditions on $\mu$ such that the system has a prescribed number of Jacobi stable fixed points.

\vspace{0.1cm}


There are several software packages that can be used for solving semi-algebraic systems of the form \eqref{kcc-main-13}. For example, the Maple package {\sf DV} by Moroz and Rouillier, and the Maple package {\sf DISCOVERER} (see also recent improvements in the Maple package RegularChains[SemiAlgebraicSe- tTools]), developed by Xia. These packages implement the method of discriminant varieties of Lazard and Rouillier \cite{DLFR} and the method of Yang and Xia \cite{LYBX05} for real solution classification. In Section \ref{sect4}, we will present several examples to demonstrate the applicability and the computational efficiency of our general algorithmic approach. In practice, we will use the output results of Theorem \ref{semi-kcc} to study the focusing/dispersing behavior of the trajectories near the fixed points of a system of the form \eqref{eq1-2}. The dynamic behaviors of the resulting deviation equations near the fixed points will also be performed by numerical simulations.


\section{Examples and Experiments}\label{sect4}
In this section, we explain how to apply the algorithmic tests to study Jacobi stability of systems of second order ODEs and illustrate some of the computational steps by a model of wound strings. In addition, using our algorithmic approach, we also analyze Jacobi stability of an airfoil model with cubic nonlinearity in supersonic flow and a 3-DOF tractor seat-operator model. The experimental results show the applicability and efficiency of our algorithmic approach. All the computations were made in Maple 2023, running under a Windows 11 laptop on Intel(R) Core(TM) Ultra 7 155H  1.40 GHz.

\subsection{Illustrative Example: A Model of Wound Strings}\label{sect5.1}

Lake and Harko proposed a model of wound strings in 2016 \cite{Matthew2016}. It is a system of second order ODEs, expressed as follows
\begin{equation}\label{WS-1}
    \begin{split}
        \ddot{\rho}&+\frac{\rho}{a^2\rho^2+m^2R^2}\left[a^2\left(1-\dot{\rho}^2\right)-\dot{R}^2-C^2\frac{\left(a^2\rho^2+m^2R^2\right)^2}{\rho^4R^2}\right]=0,\\
        \ddot{R}&+\frac{R}{a^2\rho^2+m^2R^2}\left[m^2a^2(1-\dot{\rho}^2)-m^2\dot{R}^2-a^2C^2\frac{(a^2\rho^2+m^2R^2)^2}{\rho^2R^4}\right]=0,
    \end{split}
\end{equation}
where $\rho$ and $R$ are variables,  $a^2\in(0,1]$ is a phenomenological ``wrap factor'', $C$ and $m$ are arbitrary real constants. For brevity, we denote the parameters by $\mu=(a,C,m)$.

To write the model in the standard form of the KCC, we change the notation as
\begin{equation*}
    \rho = x_1, \quad R = x_2, \quad \dot{\rho} = y_1, \quad \dot{R} = y_2.
\end{equation*}
Let $(x,y)=(x_1,x_2,y_1,y_2)$. Then system \eqref{WS-1} becomes
\begin{equation}\label{WS-2}
\begin{split}
\ddot{x}_1+2G^1(\mu,x,y)=0,\quad
\ddot{x}_2+2G^2(\mu,x,y)=0,
\end{split}
\end{equation}
where
\begin{equation}\label{WS-3}
\begin{split}
&G^1=\frac{x_1}{2a^2x_1^2+2m^2x_2^2}\Bigg[a^2\left(1-y_1^2\right)-y_2^2-C^2\frac{\left(a^2x_1^2+m^2x_2^2\right)^2}{x_1^4x_2^2}\Bigg],\\
&G^2=\frac{x_2}{2a^2x_1^2+2m^2x_2^2}\Bigg[
m^2\Big(a^2(1-y_1^2)-y_2^2\Big)-a^2C^2\frac{(a^2x_1^2+m^2x_2^2)^2}{x_1^2x_2^4}\Bigg].
\end{split}
\end{equation}

Applying the algorithm {\bf KCC-invariant} to \eqref{WS-3}, we obtain the following deviation curvature tensor $(P^i_j)$:
\begin{equation}\label{WS-4}
\begin{split}
P^1_1&=-\frac{1}{x_1^4x_2^2\left(a^2x_1^2+m^2x_2^2\right)^2}\Big(2 C^{2} a^{6} x_{1}^{6}+7 C^{2} a^{4} m^{2} x_{1}^{4} x_{2}^{2}\\
&+8 C^{2} a^{2} m^{4} x_{1}^{2} x_{2}^{4}+3 C^{2} m^{6} x_{2}^{6}-3 y_{2} x_{1}^{5} a^{2} y_{1} m^{2} x_{2}^{3}-2 a^{4} x_{1}^{6} x_{2}^{2}\\
&+a^{2} m^{2} x_{1}^{4} x_{2}^{4}+2 a^{2} x_{1}^{6} x_{2}^{2} y_{2}^{2}-m^{2} x_{1}^{4} x_{2}^{4} y_{2}^{2}\Big),\\
P^1_2&=-\frac{1}{x_1x_2^3\left(a^2x_1^2+m^2x_2^2\right)^2}\Big(3 C^{2} a^{6} x_{1}^{4}+6 C^{2} a^{4} m^{2} x_{1}^{2} x_{2}^{2}\\
&+3 C^{2} a^{2} m^{4} x_{2}^{4}+3 a^{2} m^{2} x_{1}^{2} x_{2}^{4} y_{1}^{2}-3 a^{2} m^{2} x_{1}^{2} x_{2}^{4}\\
&-2 a^{2} x_{1}^{3} x_{2}^{3} y_{1} y_{2}+m^{2} x_{1} x_{2}^{5} y_{1} y_{2}\Big),\\
P^2_1&=-\frac{a^2m^2}{x_1^3x_2\left(a^2x_1^2+m^2x_2^2\right)^2}\Big(3 C^{2} a^{4} x_{1}^{4}+6 C^{2} a^{2} m^{2} x_{1}^{2} x_{2}^{2}+3 C^{2} m^{4} x_{2}^{4}\\
&+a^{2} x_{1}^{5} x_{2} y_{1} y_{2}-2 y_{2} m^{2} y_{1} x_{1}^{3} x_{2}^{3}-3 a^{2} x_{1}^{4} x_{2}^{2}+3 y_{2}^{2} x_{1}^{4} x_{2}^{2}\Big),\\
P^2_2&=-\frac{a^2}{x_1^2x_2^4\left(a^2x_1^2+m^2x_2^2\right)^2}\Big(3 C^{2} a^{6} x_{1}^{6}+8 C^{2} a^{4} m^{2} x_{1}^{4} x_{2}^{2}\\
&+7 C^{2} a^{2} m^{4} x_{1}^{2} x_{2}^{4}+2 C^{2} m^{6} x_{2}^{6}-y_{1}^{2} x_{2}^{4} m^{2} a^{2} x_{1}^{4}+2 m^{4} x_{1}^{2} x_{2}^{6} y_{1}^{2}\\
&+a^{2} m^{2} x_{1}^{4} x_{2}^{4}-2 m^{4} x_{1}^{2} x_{2}^{6}-3 m^{2} x_{1}^{3} x_{2}^{5} y_{1} y_{2}
\Big).
\end{split}
\end{equation}

The obtained expressions in \eqref{WS-4} are different from those in \cite[Section 5]{Matthew2016}. After comparing the calculation process, we found that $G^2$ in \cite{Matthew2016} was incorrect ($x^2$ should be $X^2$ there). As a consequence, the symbol $X^2$ was mistakenly written as $X^1$ in $N^2_1$ and $N^2_2$. This ultimately led to errors in the $(P^i_j)$. Thus, some calculations of the deviation curvature tensor in \cite{Matthew2016} need to be reconsidered algorithmically, using the algorithm developed in this
paper.

Now consider system \eqref{WS-2} and let $\bar{x}=(x_1,x_2,y_1=0,y_2=0)$ be a fixed point of it. We have
\begin{equation}
\begin{split}
G^1(\mu,\bar{x})&=-\frac{1}{2 x_{1}^{3} \left(a^{2} x_{1}^{2}+m^{2} x_{2}^{2}\right) x_{2}^{2}}G^{1,1}(\mu,\bar{x}),  \\
G^2(\mu,\bar{x})&=-\frac{a^2}{2 x_{2}^{3} \left(a^{2} x_{1}^{2}+m^{2} x_{2}^{2}\right) x_{1}^{2}}G^{2,1}(\mu,\bar{x}), 
\end{split}
\end{equation}
where
\begin{equation}
\begin{split}
G^{1,1}(\mu,\bar{x})&=C^{2} a^{4} x_{1}^{4}+2 C^{2} a^{2} m^{2} x_{1}^{2} x_{2}^{2}+C^{2} m^{4} x_{2}^{4}-a^{2} x_{1}^{4} x_{2}^{2},\\
G^{2,1}(\mu,\bar{x})&=C^{2} a^{4} x_{1}^{4}+2 C^{2} a^{2} m^{2} x_{1}^{2} x_{2}^{2}+C^{2} m^{4} x_{2}^{4}-m^{2} x_{1}^{2} x_{2}^{4}.\nonumber
\end{split}
\end{equation}
The characteristic polynomial of $(P_j^i)$ at the fixed point is
\begin{equation}\label{kcc-main-JJ}
\begin{split}
p(\lambda)=\lambda^2+a_1(\mu,\bar{x})\lambda+a_2(\mu,\bar{x}),
\end{split}
\end{equation}
where
\begin{equation}
\begin{split}
a_1(\mu,\bar{x})&=\frac{1}{\left(a^{2} x_{1}^{2}+m^{2} x_{2}^{2}\right)^{2} x_{1}^{4} x_{2}^{4}}a_{1,1}(\mu,\bar{x}),\\
a_2(\mu,\bar{x})&=\frac{2a^2}{\left(a^{2} x_{1}^{2}+m^{2} x_{2}^{2}\right)^{2} x_{1}^{6} x_{2}^{6}}a_{2,1}(\mu,\bar{x}),\nonumber
\end{split}
\end{equation}
with
\begin{equation}
\begin{split}
a_{1,1}&(\mu,\bar{x})=3 C^{2} a^{8} x_{1}^{8}+8 C^{2} a^{6} m^{2} x_{1}^{6} x_{2}^{2}+7 C^{2} a^{4} m^{4} x_{1}^{4} x_{2}^{4}\\
&+2 C^{2} a^{2} m^{6} x_{1}^{2} x_{2}^{6}+2 C^{2} a^{6} x_{1}^{6} x_{2}^{2}+7 C^{2} a^{4} m^{2} x_{1}^{4} x_{2}^{4}+8 C^{2} a^{2} m^{4} x_{1}^{2} x_{2}^{6}\\
&+3 C^{2} m^{6} x_{2}^{8}+a^{4} m^{2} x_{1}^{6} x_{2}^{4}-2 a^{2} m^{4} x_{1}^{4} x_{2}^{6}-2 a^{4} x_{1}^{6} x_{2}^{4}+a^{2} m^{2} x_{1}^{4} x_{2}^{6},\\
a_{2,1}&(\mu,\bar{x})=3 C^{4} a^{8} x_{1}^{8}+8 C^{4} a^{6} m^{2} x_{1}^{6} x_{2}^{2}+10 C^{4} a^{4} m^{4} x_{1}^{4} x_{2}^{4}\\
&+8 C^{4} a^{2} m^{6} x_{1}^{2} x_{2}^{6}+3 C^{4} m^{8} x_{2}^{8}-3 C^{2} a^{6} x_{1}^{8} x_{2}^{2}\\
&+5 C^{2} a^{4} m^{2} x_{1}^{6} x_{2}^{4}+5 C^{2} a^{2} m^{4} x_{1}^{4} x_{2}^{6}-3 C^{2} m^{6} x_{1}^{2} x_{2}^{8}-a^{2} m^{2} x_{1}^{6} x_{2}^{6}.\nonumber
\end{split}
\end{equation}

According to Theorem \ref{semi-kcc}, we formulate the following algebraic system $\mathcal{S}$ for the study of Jacobi fixed points of system \eqref{WS-2}:
\begin{equation}
\mathcal{S}:=
\left\{\begin{array}{l}
G^{1,1}(\mu,\bar{x})=G^{2,1}(\mu,\bar{x})
=0,\\ 
x_1\neq0,\,\, x_2\neq0,\,\, a^2>0,\,\, 1-a^2\geq0,\\
a_{1,1}(\mu,\bar{x})>0,\,\, a_{2,1}(\mu,\bar{x})>0.\\
\end{array}\right.\nonumber
\end{equation}
Using the command RealRootClassification in the Maple package RegularChains[SemiAlgebraicSe- tTools] to the above system, we see that system $\mathcal{S}$ cannot have one, two, or three real solutions; system $\mathcal{S}$ always has four real solutions when $m\neq0$, $C\neq0$, and $a^2-1\neq0$.

Here we restate the result related to the Jacobi stability of the model \eqref{WS-1} as follows.

\begin{theorem}\label{main-kcc1}
The model \eqref{WS-1} cannot have one, two, or three Jacobi stable fixed points, and always has four Jacobi stable fixed points when the condition
\[\mathcal{C}_0=[m\neq0, \quad C\neq0,\quad  a^2-1\neq0]\]
holds.
\end{theorem}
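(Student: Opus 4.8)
The plan is to reduce everything to Theorem~\ref{semi-kcc} and then carry out the real-solution count by hand, exploiting the symmetry of \eqref{WS-3}. Since the characteristic polynomial of $(P^i_j)$ at a fixed point is the quadratic $p(\lambda)=\lambda^2+a_1\lambda+a_2$, the Routh--Hurwitz list \eqref{kcc-main-12} degenerates to the two conditions $a_1>0$ and $a_2>0$; on the admissible set $x_1x_2\neq 0$ the denominators of $a_1$ and $a_2$ displayed above are strictly positive, so Jacobi stability at a fixed point $\bar{x}=(x_1,x_2,0,0)$ is equivalent to $a_{1,1}(\mu,\bar x)>0$ and $a_{2,1}(\mu,\bar x)>0$. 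By Theorem~\ref{semi-kcc} the number of Jacobi stable fixed points of \eqref{WS-1} therefore equals the number of real solutions of the semi-algebraic system $\mathcal S$.

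I would first locate all fixed points. Subtracting the two defining equations gives $G^{1,1}-G^{2,1}=x_1^2x_2^2\,(m^2x_2^2-a^2x_1^2)$, so on $x_1x_2\neq 0$ every fixed point lies on the curve $a^2x_1^2=m^2x_2^2$. Feeding this relation back into $G^{1,1}=0$ and using $a^2x_1^2+m^2x_2^2=2m^2x_2^2$ reduces the equation to $m^2x_2^4\,(4C^2m^2-x_1^2)=0$, whence $x_1^2=4C^2m^2$ and then $x_2^2=4a^2C^2$. Hence, if $C\neq 0$ and $m\neq 0$ (note $a^2\in(0,1]$ forces $a\neq 0$), the model has exactly four real fixed points — the four sign choices of $(\pm 2Cm,\pm 2Ca)$ — while if $C=0$ or $m=0$ it has none on the admissible domain.

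Next I would use the structural symmetry to handle stability. The polynomials $a_{1,1}$ and $a_{2,1}$ involve only even powers of $x_1$ and of $x_2$, so their values at a fixed point depend solely on $(x_1^2,x_2^2)=(4C^2m^2,4a^2C^2)$, which is the same at all four fixed points. Consequently $(a_1,a_2)$ is identical at each, so the four fixed points are simultaneously Jacobi stable or simultaneously Jacobi unstable; together with the previous paragraph, the number of Jacobi stable fixed points is always $0$ or $4$, never $1$, $2$, or $3$. To pin down which alternative occurs under $\mathcal C_0$, I would substitute $x_1^2=4C^2m^2$, $x_2^2=4a^2C^2$ (equivalently $a^2x_1^2=m^2x_2^2=4a^2C^2m^2$) into the explicit expressions for $a_{1,1}$ and $a_{2,1}$; after factoring, both should collapse to a positive monomial in $C^2,a^2,m^2$ times a manifestly positive factor, so $a_{1,1}>0$ and $a_{2,1}>0$ whenever $C\neq 0$ and $m\neq 0$. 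Combined with the count above, this yields exactly four Jacobi stable fixed points under $\mathcal C_0$.

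The only genuinely laborious ingredient is this last substitution-and-factorization verification for $a_{1,1}$ and $a_{2,1}$: it is a finite manipulation of sizable polynomials, and performing it reliably is precisely what the symbolic pipeline behind Theorem~\ref{semi-kcc} (triangular decomposition of the equations of $\mathcal S$ followed by real solution classification, carried out here by the command RealRootClassification) automates. One should also note, as an immediate consequence of the computation above, that the four solutions stay real and distinct on the boundary stratum $a^2=1$ of the constraint $a^2\le 1$; the presence of $a^2-1\neq 0$ in $\mathcal C_0$ reflects the generic cell produced by the classification rather than any actual loss of fixed points there.
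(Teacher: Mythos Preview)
Your proposal is correct and in fact more transparent than the paper's own argument. The paper does not carry out any of the by-hand analysis you sketch: it simply formulates the semi-algebraic system $\mathcal S$ and invokes the Maple command \texttt{RealRootClassification} as a black box, reporting that $\mathcal S$ never has one, two, or three real solutions and has exactly four when $m\neq 0$, $C\neq 0$, $a^2-1\neq 0$. Your route---subtracting $G^{1,1}-G^{2,1}$ to extract the relation $a^2x_1^2=m^2x_2^2$, back-substituting to get $x_1^2=4C^2m^2$, $x_2^2=4a^2C^2$, and then using the evenness of $a_{1,1},a_{2,1}$ in $x_1,x_2$ to conclude that all four fixed points share the same Jacobi status---is a genuinely different and more illuminating argument: it \emph{explains} why the count is always $0$ or $4$, whereas the paper merely certifies this fact by machine. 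Your observation that $a^2-1\neq 0$ is an artifact of the cell decomposition rather than an actual obstruction is also a point the paper does not make.

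The only soft spot is the final positivity check for $a_{1,1}$ and $a_{2,1}$ at $(x_1^2,x_2^2)=(4C^2m^2,4a^2C^2)$, which you describe as what ``should'' happen rather than something you verify. This is an honest gap in the write-up, but it is a purely mechanical polynomial evaluation (and, using $a^2x_1^2=m^2x_2^2$, the substitution does simplify dramatically), so it does not undermine the strategy; it is exactly the residue you correctly identify as the piece the symbolic pipeline would otherwise absorb.
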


\begin{remark}
Theorem \ref{main-kcc1} is quite different from the conclusion made in \cite[Theorem 1]{Matthew2016}. To verify the correctness of our result, we will perform numerical simulations of dynamic behaviors of the model \eqref{WS-1} near the fixed points. We take the parameter values $(a,C,m)=(\frac{1}{2},1,-1)$. According to Theorem \ref{main-kcc1}, the model \eqref{WS-2} has four Jacobi stable fixed points $(\bar{x}_1,\bar{x}_2)=(\pm 2,\pm 1)$ and $(\bar{x}_1,\bar{x}_2)=(\pm 2,\mp 1)$. However, by \cite[Theorem 1]{Matthew2016}, we see that the fixed points $(-2,-1)$ and $(2,1)$ are Jacobi stable, while the other two fixed points $(-2,1)$ and $(2,-1)$ are Jacobi unstable. In the following we will show that both of the two fixed points $(-2,1)$ and $(2,-1)$ are also Jacobi stable.

\end{remark}

Taking $(a,C,m)=(\frac{1}{2},1,-1)$ and applying algorithm {\bf KCC-deviation} to the $G^i$ in \eqref{WS-2}, we obtain the following deviation equations:
\begin{equation}\label{WS-dev-1}
    \begin{split}
        &\frac{d^2\xi_1}{dt^2}-\frac{2x_1y_1}{x_1^2+4x_2^2}\frac{d\xi_1}{dt}-\frac{8x_1y_2}{x_1^2+4x_2^2}\frac{d\xi_2}{dt}+M^1_1\xi_1+M^1_2\xi_2=0,\\
        &\frac{d^2\xi_2}{dt^2}-\frac{2x_2y_1}{x_1^2+4x_2^2}\frac{d\xi_1}{dt}-\frac{8x_2y_2}{x_1^2+4x_2^2}\frac{d\xi_2}{dt}+M^2_1\xi_1+M^2_2\xi_2=0,
    \end{split}
\end{equation}
where
\begin{equation*}
\begin{split}
M^1_1&=\frac{1}{4 \left(x_{1}^{2}+4 x_{2}^{2}\right)^{2} x_{1}^{4} x_{2}^{2}}\Big(4 x_{1}^{6} x_{2}^{2} y_{1}^{2}+16 x_{1}^{6} x_{2}^{2} y_{2}^{2}-16 x_{1}^{4} x_{2}^{4} y_{1}^{2}\\
&-64 x_{1}^{4} x_{2}^{4} y_{2}^{2}-4 x_{1}^{6} x_{2}^{2}+16 x_{1}^{4} x_{2}^{4}+x_{1}^{6}+20 x_{1}^{4} x_{2}^{2}+112 x_{1}^{2} x_{2}^{4}+192 x_{2}^{6}\Big),\\
M^1_2&=\frac{1}{2 x_{1} \left(x_{1}^{2}+4 x_{2}^{2}\right)^{2} x_{2}^{3}}\Big(16 x_{1}^{2} x_{2}^{4} y_{1}^{2}+64 x_{1}^{2} x_{2}^{4} y_{2}^{2}-16 x_{1}^{2} x_{2}^{4}\\
&+x_{1}^{4}+8 x_{1}^{2} x_{2}^{2}+16 x_{2}^{4}\Big),\\
M^2_1&=\frac{1}{2 x_{2} \left(x_{1}^{2}+4 x_{2}^{2}\right)^{2} x_{1}^{3}}\Big(4 x_{1}^{4} x_{2}^{2} y_{1}^{2}+16 x_{1}^{4} x_{2}^{2} y_{2}^{2}-4 x_{1}^{4} x_{2}^{2}+x_{1}^{4}\\
&+8 x_{1}^{2} x_{2}^{2}+16 x_{2}^{4}\Big),\\
M^2_2&=\frac{1}{16 \left(x_{1}^{2}+4 x_{2}^{2}\right)^{2} x_{1}^{2} x_{2}^{4}}\Big(-16 x_{1}^{4} x_{2}^{4} y_{1}^{2}-64 x_{1}^{4} x_{2}^{4} y_{2}^{2}+64 x_{1}^{2} x_{2}^{6} y_{1}^{2}\\
&+256 x_{1}^{2} x_{2}^{6} y_{2}^{2}+16 x_{1}^{4} x_{2}^{4}-64 x_{1}^{2} x_{2}^{6}+3 x_{1}^{6}+28 x_{1}^{4} x_{2}^{2}+80 x_{1}^{2} x_{2}^{4}+64 x_{2}^{6}\Big).
\end{split}
\end{equation*}

By evaluating system \eqref{WS-dev-1} at the fixed points $(x_1,x_2,y_1,y_2)=(-2,1,0,0)$ and $(x_1,x_2,y_1,y_2)=(2,-1,0,0)$ respectively, we find that the corresponding deviation equations have the same expression:
\begin{equation}\label{WS-dev-2}
\begin{split}
\frac{d^2\xi_1}{dt^2}+\frac{1}{4}\xi_1=0,\quad
\frac{d^2\xi_2}{dt^2}+\frac{1}{4}\xi_2=0.
\end{split}
\end{equation}

A direct computation shows that the general solutions to these equations are given by
\begin{equation}\label{WS-dev-3}
\begin{split}
\xi_1(t)=2\xi_{10}\sin\Big(\frac{t}{2}\Big),\quad
\xi_2(t)=2\xi_{20}\sin\Big(\frac{t}{2}\Big),
\end{split}
\end{equation}
where we have used the initial conditions $\xi_1(0)=\xi_2(0)=0$, $\dot{\xi}_1(0)=\xi_{10}$, $\dot{\xi}_2(0)=\xi_{20}$. It follows from \eqref{kcc-deviation-vector} that
\begin{equation}\label{WS-dev-4}
\begin{split}
||\xi(t)||^2=\frac{1}{\xi_{10}^2+\xi_{20}^2}\cdot\Big[(\xi_1(t))^2+(\xi_2(t))^2\Big]=4\sin^2\left(\frac{t}{2}\right)<t^2,\,\, t\approx0^+.\nonumber
\end{split}
\end{equation}
According to the KCC theory in Section \ref{sect2}, we prove that the fixed points $(-2,1)$ and $(2,-1)$ are also Jacobi stable.

In fact, it is easy to check that 
the expressions of the deviation equations in \eqref{WS-dev-1} are the same at the four fixed points $(\pm 2,\pm 1)$ and $(\pm 2,\mp 1)$. That is to say, these four fixed points have the same stability. This observation further verifies our conclusion. In the following, based on our computational results, we study the focusing/dispersing
behavior of the trajectories near the fixed points of the model \eqref{WS-1}. In order to save space, the dynamic behaviors of the model \eqref{WS-1} near the fixed points are put in Appendix \ref{sec-A}.

\subsection{Other Models and Remarks}\label{sect4.2}

\subsubsection{An Airfoil Model with Cubic Nonlinearity in Supersonic Flow}\label{sect4.2-1}

The aeroelastic model with cubic nonlinear pitching stiffness can be expressed as 
\begin{equation}\label{AM-1}
    \begin{split}
        &m\ddot{h}+S_{\alpha}\ddot{\alpha}+c_h\dot{h}+K_h h =L, \\
        &S_{\alpha}\ddot{h}+ I_{\alpha}\ddot{\alpha}+c_{\alpha}\dot{\alpha}+K_{\alpha}\alpha+eK_{\alpha}\alpha^3 = M,
    \end{split}
\end{equation}
where $m$ is the mass of the wing, $S_{\alpha}=mx_{\alpha}b$ is the wing mass static moment about the elastic axis, $I_{\alpha}=m(r_{\alpha}b)^2$ is the inertial moment of the wing about the elastic axis, $c_{h}$ and $c_{\alpha}$ are the linear plunging and pitching damping coefficients, $K_h$ and $K_{\alpha}$ are the plunging and pitching stiffness coefficients, $e$ is
the non-dimensional nonlinear stiffness coefficient, $b$ is the wing's semi-chord length, $h$ is the plunging displacement, $\alpha$ is the pitching angle, the superposed dots denote derivative with respect to time $t$, $L$ and $M$ are the aerodynamic force and moment. We refer to \cite{Librescu2003,Zheng2008} for the setting details of this example.

Introducing the following dimensionless variable
\begin{equation}
\begin{split}
\eta&=\frac{h}{b},\quad \tau=\frac{U_{\infty}}{b}t,\quad 
\omega_h=\sqrt{\frac{K_h}{m}},\quad \omega_{\alpha}=\sqrt{\frac{K_{\alpha}}{I_{\alpha}}},\\
\bar{\omega}&=\frac{\omega_h}{\omega_{\alpha}},\quad 
\mu=\frac{m}{4\rho_{\infty}b^2},\quad V=\frac{U_{\infty}}{b\omega_{\alpha}}\nonumber
\end{split}
\end{equation}
the dimensionless equations of motion for the airfoil can be written as (see \cite{Zheng2008} for the setting details)
\begin{equation}\label{AM-2}
\begin{split}
\ddot{\eta}+x_{\alpha} \ddot{\alpha}&+ 2\zeta_h \frac{\bar{\omega}}{V}\dot{\eta}+\left(\frac{\bar{\omega}}{V}\right)^2\eta=-\frac{1}{\mu M_{\infty}}\Bigg[\alpha+\dot{\eta}+(1-a)\dot{\alpha}\\&\qquad\quad+\frac{1}{12}M_{\infty}^2(1+k)\alpha^3\Bigg], \\
\frac{x_{\alpha}}{r_{\alpha}^2}\ddot{\eta}+\ddot{\alpha}& +2\zeta_{\alpha}\frac{1}{V}\dot{\alpha}+\frac{1}{V^2}\alpha+\frac{e}{V^2}\alpha^3=\\
        &\frac{1}{\mu M_{\infty}r_{\alpha}^2}\Bigg[(1-a)\alpha+(1-a)\dot{\eta}+\left(\frac{4}{3}-2a+a^2\right)\dot{\alpha}\\
        &+\frac{1}{12}M_{\infty}^2(1-a)(1+k)\alpha^3\Bigg],
    \end{split}
\end{equation}
where $M_{\infty}$ is the flight Mach number, $k$ is the isentropic gas coefficient. For easy reference and comparison, we take some of the parameter values for the airfoil constants suggested in \cite{Zheng2008}
\begin{equation}
\begin{split}
\mu&=50,\quad \bar{\omega}=1, \quad x_{\alpha}=\frac{1}{4},\quad a=r_{\alpha}^2=\frac{1}{2},\\
\zeta_h&=\zeta_{\alpha}=\frac{1}{10},\quad e=20,\quad k=\frac{7}{5}.\nonumber
\end{split}
\end{equation}
Let $\eta=x_1$, $\alpha=x_2$, $\dot{\eta}=y_1$, $\dot{\alpha}=y_2$, set $(x,y)=(x_1,x_2,y_1,y_2)$ and $\bar{\mu}=(M_{\infty},V)$. Then system \eqref{AM-2} can be further transformed into the following KCC standard form:
\begin{equation}\label{AM-3}
\begin{split}
\ddot{x}_1+2G^1(\bar{\mu},x,y)=0,\quad
\ddot{x}_2+2G^2(\bar{\mu},x,y)=0,
\end{split}
\end{equation}
where
\begin{equation}
\begin{split}
G^1&(\bar{\mu},x,y)=\frac{1}{2100 V^{2}M_{\infty}}\Big(6 V^{2} M_{\infty}^{2} x_{2}^{3}-6000 M_{\infty} x_{2}^{3}+30 V^{2} x_{2}+30 V^{2} y_{1}\\
&+19 V^{2} y_{2}+240 V M_{\infty} y_{1}-60 V M_{\infty} y_{2}+1200 M_{\infty} x_{1}-300 M_{\infty} x_{2}\Big),\\
G^2&(\bar{\mu},x,y)=-\frac{1}{5250 V^{2}M_{\infty}}\Big(18 V^{2} M_{\infty}^{2} x_{2}^{3}-60000 M_{\infty} x_{2}^{3}+90 V^{2} x_{2}+90 V^{2} y_{1}\\
&+85 V^{2} y_{2}+300 V M_{\infty} y_{1}-600 V M_{\infty} y_{2}+1500 M_{\infty} x_{1}-3000 M_{\infty} x_{2}\Big).\nonumber
\end{split}
\end{equation}

Our computational results of the model \eqref{AM-3} are summarized as the following theorem.
\begin{theorem}\label{KCC-math-airfoil}
The following statements hold for the aeroelastic airfoil model \eqref{AM-3}.

\begin{itemize}
    \item[(a)] System \eqref{AM-3} has exactly one Jacobi stable fixed point if and only if one of the following three conditions holds
\begin{equation}
\begin{split}  
\mathcal{C}_1=&[0<R_1,\,\, 0<R_2,\,\,0<R_4,\,\, 0<R_5,\,\, 0<R_6] \wedge \mathcal{C}^*,\\
\mathcal{C}_2=&[0<R_1,\,\, 0<R_2,\,\,0<R_3,\,\, R_4<0] \wedge \mathcal{C}^*,\\
\mathcal{C}_3=&[0<R_1,\,\, 0<R_2,\,\,0<R_3,\,\,0<R_4,\,\,R_5<0,\,\,0<R_6]\wedge \mathcal{C}^*,\nonumber
\end{split}
\end{equation}
where
\begin{equation}
\begin{split}
&R_1=-9450 V^{2} M_{\infty}-43 V^{2}+135 V M_{\infty}+621900 M_{\infty}^{2},\\
&R_2=1800 V^{4} M_{\infty}+1500 V^{3} M_{\infty}^{2}-15660000 V^{2} M_{\infty}^{3}+4 V^{4}+20 V^{3} M_{\infty}\\
&-123675 V^{2} M_{\infty}^{2}+297000 V M_{\infty}^{3}+769590000 M_{\infty}^{4},\\
&R_3=-V^{2}+50 M_{\infty},\\
&R_4=V^{2} M_{\infty}-5000,\\
&R_5=-18900 V^{4} M_{\infty}^{2}+43 V^{4} M_{\infty}-135 V^{3} M_{\infty}^{2}+795600 V^{2} M_{\infty}^{3}\\
&+47250000 V^{2} M_{\infty}-215000 V^{2}+675000 V M_{\infty}-1615500000 M_{\infty}^{2},\\
&R_6=3600 V^{6} M_{\infty}^{2}+3000 V^{5} M_{\infty}^{3}-31320000 V^{4} M_{\infty}^{4}-4 V^{6} M_{\infty}-20 V^{5} M_{\infty}^{2}\\
&-146325 V^{4} M_{\infty}^{3}-522000 V^{3} M_{\infty}^{4}+1579410000 V^{2} M_{\infty}^{5}-4500000 V^{4} M_{\infty}\\
&-532500000 V^{3} M_{\infty}^{2}+155250000000 V^{2} M_{\infty}^{3}+20000 V^{4}+100000 V^{3} M_{\infty}\\
&+56625000 V^{2} M_{\infty}^{2}+28485000000 V M_{\infty}^{3}-7829550000000 M_{\infty}^{4},\\
&\mathcal{C}^*=[(M_{\infty}-10)\cdot R_1\cdot R_2\cdot R_3\cdot R_4\cdot R_5\cdot R_6\neq0].
\nonumber
\end{split}
\end{equation}
    
    \item[(b)] System \eqref{AM-3} has exactly two Jacobi stable fixed points if and only if one of the following two conditions holds
\begin{equation}
\begin{split}  
\mathcal{C}_4=&[R_1<0,\,\, R_3<0,\,\,R_4<0,\,\, 0<R_5,\,\, 0<R_6] \wedge \mathcal{C}^*,\\
\mathcal{C}_5=&[0<R_1,\,\, R_2<0,\,\,R_3<0,\,\,R_4<0,\,\, 0<R_5,\,\, 0<R_6] \wedge \mathcal{C}^*.\nonumber
\end{split}
\end{equation}

\end{itemize}

\end{theorem}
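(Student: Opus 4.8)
The plan is to specialize the pipeline behind Theorem~\ref{semi-kcc} to the two-dimensional system \eqref{AM-3}. First I would determine the fixed points by setting $y_1=y_2=0$ and solving $G^1(\bar\mu,x,0)=G^2(\bar\mu,x,0)=0$ for $(x_1,x_2)$; since the numerators of the $G^i$ are polynomials in $x_1$ and $x_2$ (cubic in $x_2$), this is a polynomial system, and triangular decomposition or Gr\"obner bases expresses its real solutions parametrically in $\bar\mu=(M_\infty,V)$ (the denominator condition of \eqref{kcc-main-13} being here simply $V\neq0$, $M_\infty\neq0$). Next I would run Algorithm~\textbf{KCC-invariant} on the $G^i$ given above to obtain the deviation curvature tensor $(P^i_j)$, evaluate it at a fixed point, and extract the coefficients $a_1,a_2$ of its characteristic polynomial $p(\lambda)=\lambda^2+a_1\lambda+a_2$, recording the numerators $a_{1,1},a_{2,1}$ and observing that the common denominators are of fixed sign (a positive constant times a perfect square).

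For $n=2$ the Routh--Hurwitz list \eqref{kcc-main-12} collapses to the two conditions $a_1>0$ and $a_2>0$ (the determinant $\Delta_2=a_1a_2$ being then automatic), so by Definition~\ref{kcc-def} a fixed point is Jacobi stable precisely when its coordinates satisfy $a_{1,1}(\bar\mu,x,0)>0$ and $a_{2,1}(\bar\mu,x,0)>0$. Counting Jacobi stable fixed points therefore amounts to counting the real solutions of the semi-algebraic system
\[
\mathcal S:=\{\,G^{1,1}(\bar\mu,x,0)=G^{2,1}(\bar\mu,x,0)=0,\ \ a_{1,1}(\bar\mu,x,0)>0,\ \ a_{2,1}(\bar\mu,x,0)>0\,\}
\]
as a function of $\bar\mu$, which is exactly the form treated by Theorem~\ref{semi-kcc}.

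The decisive step is the real solution classification of $\mathcal S$. Using RealRootClassification (Yang--Xia's method, as in the wound-string example) one triangularizes the equation part of $\mathcal S$ and, for each regular triangular set, obtains a finite partition of the $(M_\infty,V)$-plane---its boundaries being the zero loci of the border polynomial and of the relevant discriminants---inside which the number of real zeros of the triangular set and the signs of $a_{1,1},a_{2,1}$ at those zeros are constant. Choosing a rational sample point in each cell and isolating the real zeros there records how many fixed points are Jacobi stable in that cell; the polynomials $R_1,\dots,R_6$ and the factor $M_\infty-10$ appearing in $\mathcal C^*$ are precisely the irreducible factors of these boundary polynomials. I would then assemble the cells with exactly one Jacobi stable fixed point into the disjunction $\mathcal C_1\vee\mathcal C_2\vee\mathcal C_3$ and those with exactly two into $\mathcal C_4\vee\mathcal C_5$, simplifying the raw sign patterns by absorption.

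The hard part will be the real solution classification itself: the border polynomials are of high degree in $(M_\infty,V)$---as already visible in $R_2,R_5,R_6$---so the cell decomposition is sizeable, and certifying that no other sign pattern yields exactly one (resp.\ two) stable fixed points, that patterns producing three or four stable fixed points do not occur, and that the simplified conditions $\mathcal C_i$ are logically equivalent to the full cell description, requires careful bookkeeping. In practice this is carried out with the symbolic package and cross-checked by the numerical simulations described in the appendices.
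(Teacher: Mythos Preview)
Your proposal is correct and follows essentially the same approach as the paper: the paper presents Theorem~\ref{KCC-math-airfoil} as a computational result obtained by applying the pipeline of Theorem~\ref{semi-kcc} and \texttt{RealRootClassification} to system~\eqref{AM-3}, without spelling out the intermediate steps. Your description of the fixed-point computation, the reduction of the Routh--Hurwitz conditions to $a_1>0,\ a_2>0$ in dimension two, and the role of $R_1,\dots,R_6$ and $M_\infty-10$ as factors of the border/discriminant polynomials matches what the paper implicitly does.
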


In Figure \ref{fig:kcc-equilibria}, we provide partitions of the parameter sets $\mathcal{C}_1$ and $\mathcal{C}_5$ for distinct numbers of Jacobi stable fixed points. The focusing/dispersing
behaviors of the trajectories near the fixed points of the model \eqref{AM-3} are given in Appendix \ref{sec-B}.

\begin{figure}[htbp]
 \centering
\includegraphics[width=0.4\textwidth]{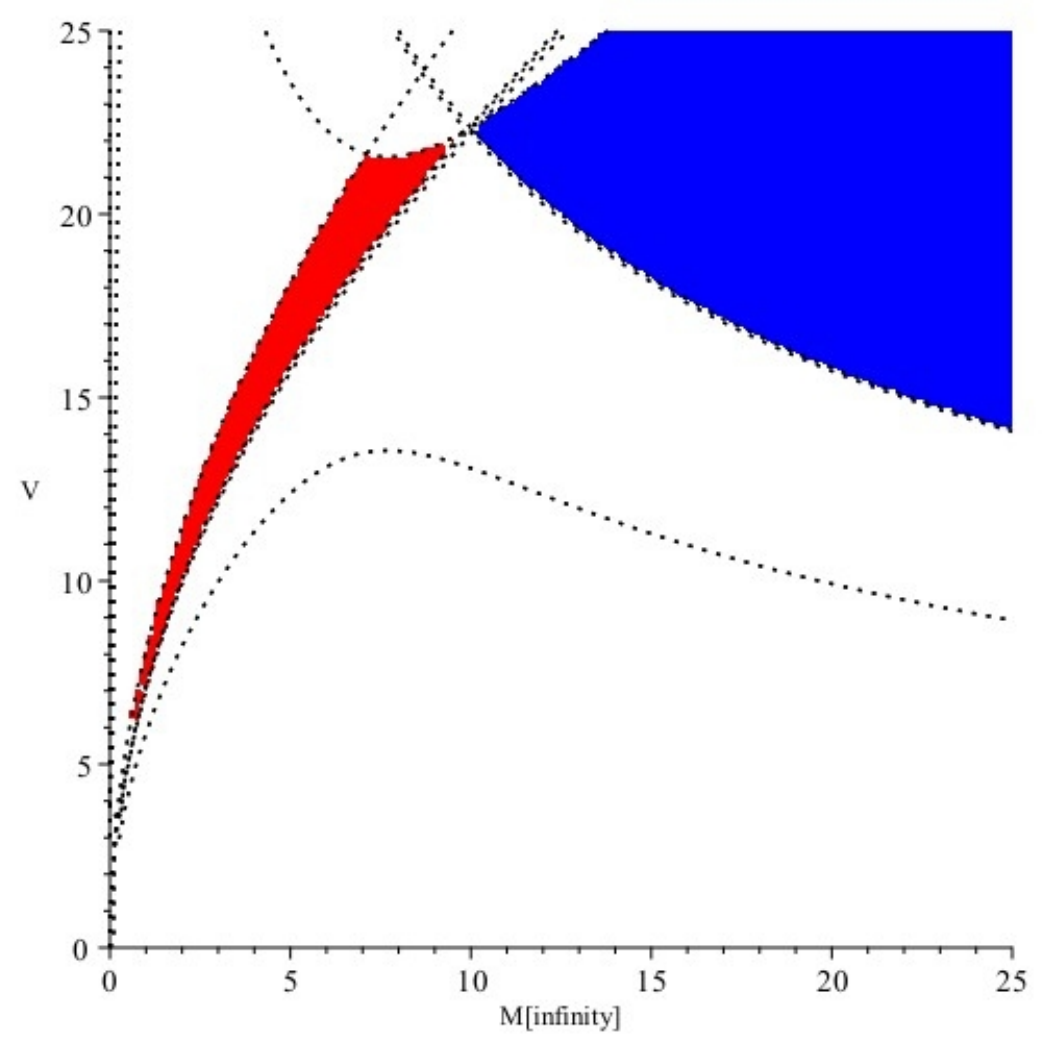}
    \caption{In the blue and red regions, system \eqref{AM-3} has one and two Jacobi stable fixed points, respectively. }\label{fig:kcc-equilibria}
\end{figure}

\subsubsection{A 3-DOF Tractor Seat-Operator Model}\label{sect4.2-2}

In order to save space, the results of the model are given in Appendix \ref{sec-C}.

\section{Conclusion and Future Work}\label{sect5}

We develop a symbolic Maple program for computing the second KCC invariant for systems of second order ODEs in arbitrary dimension. Our program allowed us to detect errors in previous hand-guided computations. With the aid of this program, we reduce the analysis of Jacobi stability to the solution of a semi-algebraic system and introduce a systematical computational approach for rigorously analyzing the conditions on the parameters under which a considered system of second order ODEs has given numbers of Jacobi stable fixed points. We present some results about the Jacobi stability of certain differential models, including the model of wound strings, a two-dimensional airfoil model with cubic nonlinearity in supersonic flow and the 3-DOF tractor seat-operator model. The results of experiments performed demonstrate the effectiveness of the proposed algorithmic approach.

As mentioned in Section \ref{sect-main}, in the {\bf Step 1} of the Jacobi stability analysis. Currently, we focus on the systems of second order ODEs that are linear with respect to the variables $(d^2x_1/dt^2,\ldots,d^2x_n/dt^2)$. For nonlinear systems of second order ODEs with respect to these variables, we need to combine advanced techniques from symbolic computation, such as Gr{\"o}bner basis or triangular decomposition, to solve such nonlinear equations. Exploring differential elimination methods to obtain KCC standard forms from a system of second order ODEs is also one of the future research directions.

It would be interesting to employ our algorithmic approach for analyzing Jacobi stability for systems of second order ODEs from different domains of science and engineering. The involved symbolic computation may easily become too heavy and intractable as the number and degree of equations and the number of system parameters increase. It is therefore necessary to improve the methods and software package we use, to combine them with other methods and tools, and to develop specialized techniques that take into account the structure of the considered systems. Our research along these directions is in progress.

%

\section*{Acknowledgments}
B. Huang's work is supported by the National Natural Science Foundation of China (NSFC 12101032 and NSFC 12131004), and the Fundamental Research Funds for the Central Universities. D. Wang's work is supported by Beijing Advanced Innovation Center for
Future Blockchain and Privacy Computing. X. Wang wishes to thank Prof. Xiao Wen for his support and encouragement.

\bibliographystyle{plain}
\bibliography{ref}

%
\appendix
\newpage

\section{Dynamic Behaviors of the Model of Wound Strings}\label{sec-A}

We provide the trajectories of system \eqref{WS-2} near the four fixed points $(x_1,x_2)=(\pm 2,\pm 1)$ and $(x_1,x_2)=(\pm 2,\mp 1)$ in Figure \ref{fig:ex-WS-2}, and the deviation curve of $\xi_1(t)$ and $\xi_2(t)$ near the fixed point $(x_1,x_2)=(-2,1)$ in Figure \ref{fig:ex-WS-3}. 


\begin{figure}[t]
\begin{minipage}{0.49\linewidth}
    \centering
    \includegraphics[width=0.9\linewidth]{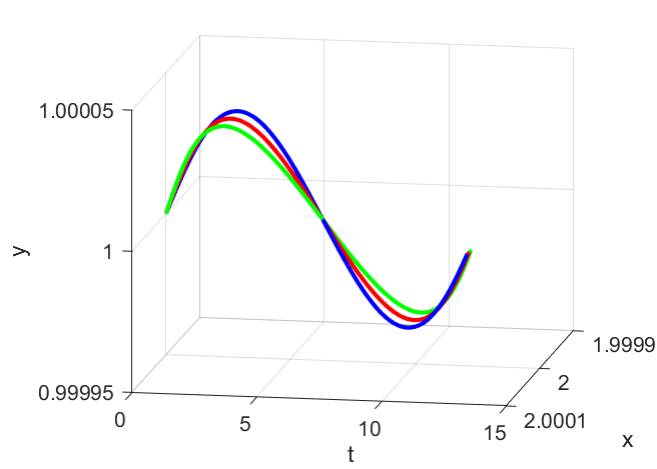}
    \caption*{(a) Trajectories of system \eqref{WS-2} starting from the fixed point $(x_1,x_2)=(2,1)$.}
\end{minipage}
\begin{minipage}{0.49\linewidth}
    \centering
    \includegraphics[width=0.9\linewidth]{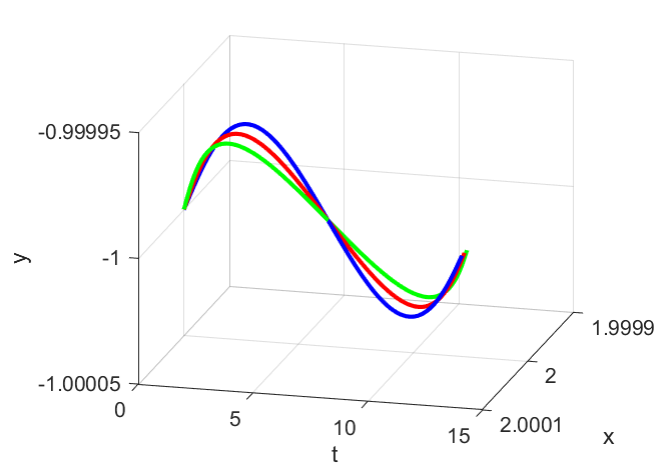}
    \caption*{(b) Trajectories of system \eqref{WS-2} starting from the fixed point $(x_1,x_2)=(2,-1)$.}
\end{minipage}
\begin{minipage}{0.49\linewidth}
    \centering
    \includegraphics[width=0.9\linewidth]{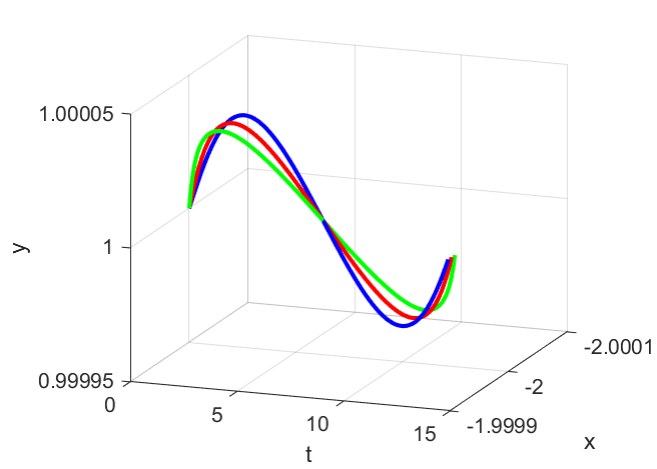}
    \caption*{(c) Trajectories of system \eqref{WS-2} starting from the fixed point $(x_1,x_2)=(-2,1)$.}
\end{minipage}
\begin{minipage}{0.49\linewidth}
    \centering
    \includegraphics[width=0.9\linewidth]{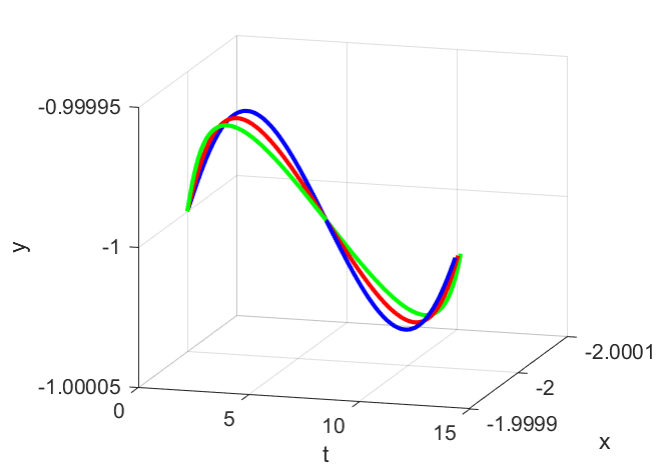}
    \caption*{(d) Trajectories of system \eqref{WS-2} starting from the fixed point $(x_1,x_2)=(-2,-1)$.}
\end{minipage}
\caption{Trajectories of system \eqref{WS-2} starting from the fixed point $(x_1,x_2)=(\pm 2,\pm 1)$ and $(x_1,x_2)=(\pm 2,\mp 1)$. The initial conditions of the blue curve, red curve, and green curve are $(x(0),y(0),\dot{x}(0),\dot{y}(0))=(x_1,x_2,10^{-5},2\times-10^{-5})$, $(x(0),y(0),\dot{x}(0),\dot{y}(0))=(x_1,x_2,2\times10^{-5},2\times 10^{-5})$, and $(x(0),y(0),\dot{x}(0),\dot{y}(0))=(x_1,x_2,3\times 10^{-5},2\times 10^{-5})$, respectively.} \label{fig:ex-WS-2}
\end{figure}

\begin{figure}[h]
\centering
\begin{minipage}{0.45\linewidth}
    \centering
    \includegraphics[width=0.99\linewidth]{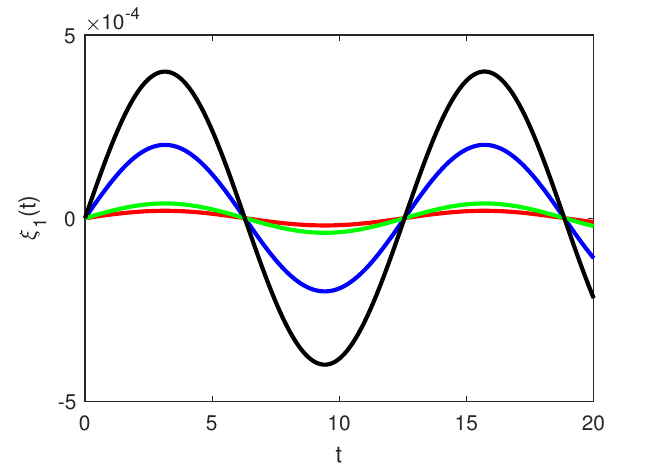}
    \caption*{(a) Deviation curve of $\xi_1(t)$.}
\end{minipage}
\begin{minipage}{0.45\linewidth}
    \centering
    \includegraphics[width=0.99\linewidth]{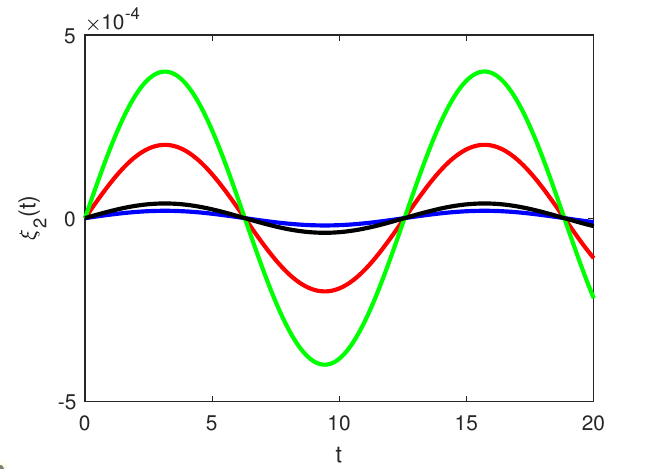}
    \caption*{(b) Deviation curve of $\xi_2(t)$.}
\end{minipage}
\caption{Deviation curve $(\xi_1(t),\xi_2(t))$ of system \eqref{WS-dev-1} near the fixed point $(x_1,x_2)=(-2,1)$. The initial conditions of the red curve, blue curve, green curve, and black curve are $(\xi_{10},\xi_{20})=(10^{-5},10^{-4})$, $(\xi_{10},\xi_{20})=(10^{-4},10^{-5})$, $(\xi_{10},\xi_{20})=(2\times 10^{-5},2\times 10^{-4})$, and $(\xi_{10},\xi_{20})=(2\times 10^{-4},2\times 10^{-5})$, respectively.}
\label{fig:ex-WS-3}
\end{figure}

\section{Dynamic Behaviors of the Airfoil Model}\label{sec-B}

\subsection{Case 1}
We take the parameter values $(M_\infty,V)=\left(\frac{2017}{256},\frac{83}{4}\right)\in\mathcal{C}_5$. In this case system \eqref{AM-3} has three fixed points $(\bar{x}_1,\bar{x}_2)\approx(\pm 0.1550,\mp 0.1202)$ and $(\bar{x}_1,\bar{x}_2)=(0,0)$. The system \eqref{AM-3} is Jacobi stable near the fixed points $(\bar{x}_1,\bar{x}_2)\approx(\pm 0.1550,\mp 0.1202)$, while is Jacobi unstable near the fixed point $(\bar{x}_1,\bar{x}_2)=(0,0)$.

The system \eqref{AM-3} has same deviation equations near the fixed points $(\bar{x}_1,\bar{x}_2)\approx(0.1550,-0.1202)$ and $(\bar{x}_1,\bar{x}_2)\approx(-0.1550,0.1202)$. We provide the graph of $||\boldsymbol{\xi}(t)||^2$ and $t^2$ near $t\approx0^+$ of the fixed point $(\bar{x}_1,\bar{x}_2)\approx(0.1550,-0.1202)$ in Figure \ref{fig:ex-AM-1}, the trajectories of system \eqref{AM-3} near the fixed points $(\bar{x}_1,\bar{x}_2)\approx(\pm 0.1550,\mp 0.1202)$ in Figure \ref{fig:ex-AM-2}, and the deviation curve of $\xi_1(t)$ and $\xi_2(t)$ near the fixed point $(\bar{x}_1,\bar{x}_2)\approx(0.1550,-0.1202)$ in Figure \ref{fig:ex-AM-3}. 

\begin{figure}[h]
    \centering
    \includegraphics[width=0.6\linewidth]{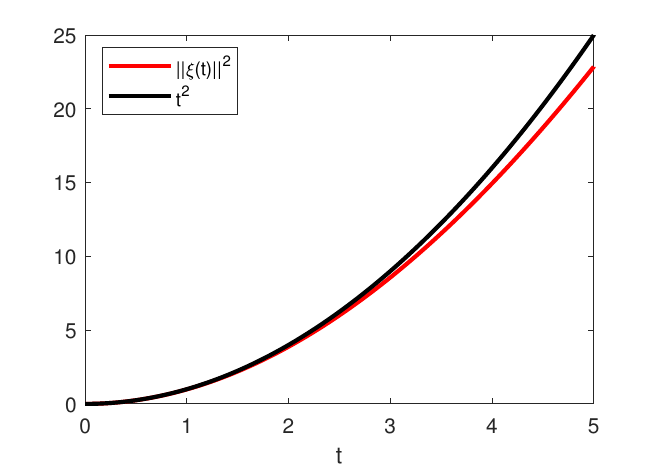}
    \caption{The curve graph of $t^2$ and $||\boldsymbol{\xi}(t)||^2$ of the corresponding deviation equations associated to system \eqref{AM-3}. The red curve is $||\boldsymbol{\xi}(t)||^2$ with the initial condition $(\xi_{10},\xi_{20}=(10^{-4},10^{-5})$.}
    \label{fig:ex-AM-1}
\end{figure}

\begin{figure}[h]
    \begin{minipage}{0.49\linewidth}
    \centering
    \includegraphics[width=0.9\linewidth]{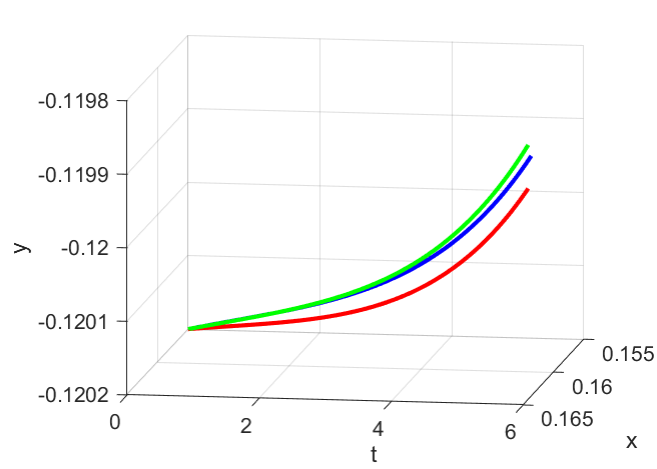}
    \caption*{(a) Trajectories of system \eqref{AM-3} starting from the fixed point $(0.1550,-0.1202)$.}
\end{minipage}
\begin{minipage}{0.49\linewidth}
    \centering
    \includegraphics[width=0.9\linewidth]{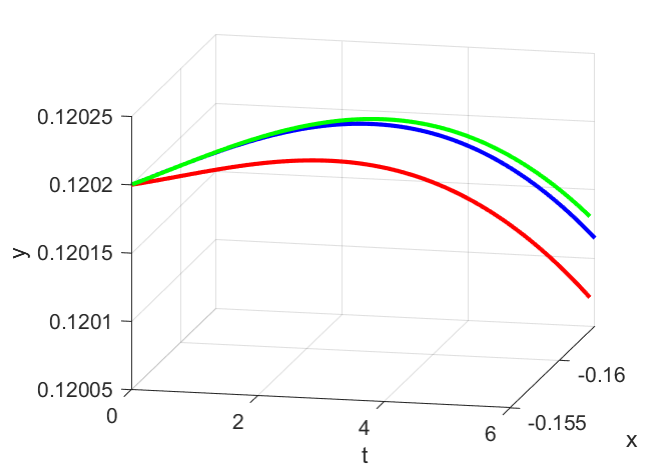}
    \caption*{(b) Trajectories of system \eqref{AM-3} starting from the fixed point $(-0.1550,0.1202)$.}
\end{minipage}

\caption{Trajectories of system \eqref{AM-3} starting from the fixed point $(\bar{x}_1,\bar{x}_2)\approx(\pm 0.1550,\mp 0.1202)$. The initial conditions of the blue curve, red curve, and green curve are $(x_1(0),x_2(0),y_1(0),y_2(0))=(\pm 0.1550,\mp 0.1202,10^{-5},2\times 10^{-5})$, $(x_1(0),x_2(0),y_1(0),y_2(0))=(\pm 0.1550,\mp 0.1202,10\times 10^{-5},10^{-5})$, and  $(x_1(0),x_2(0),y_1(0),y_2(0))=(\pm 0.1550,\mp 0.1202,10\times 10^{-5},2\times 10^{-5})$, respectively.} \label{fig:ex-AM-2}
\end{figure}

\begin{figure}[h]
\begin{minipage}{0.49\linewidth}
    \centering
    \includegraphics[width=0.99\linewidth]{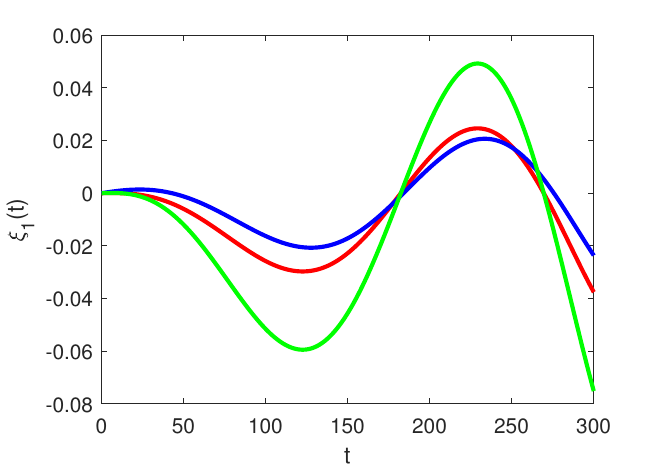}
    \caption*{(a) Deviation curve of $\xi_1(t)$.}
\end{minipage}
\begin{minipage}{0.49\linewidth}
    \centering
    \includegraphics[width=0.99\linewidth]{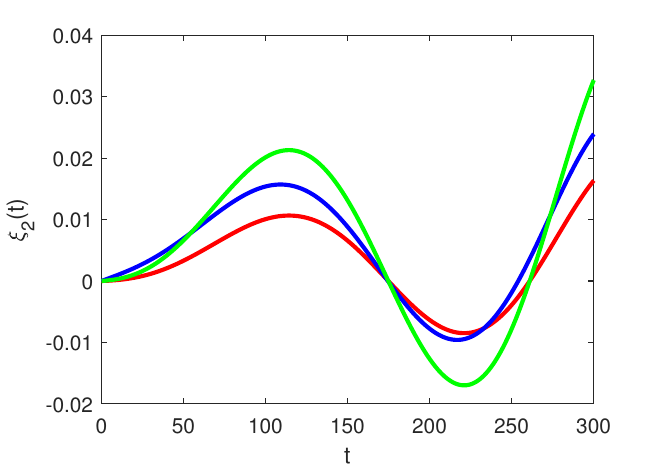}
    \caption*{(b) Deviation curve of $\xi_2(t)$.}
\end{minipage}
\caption{Deviation curve $(\xi_1(t),\xi_2(t))$ of the corresponding deviation equations associated to system \eqref{AM-3} near the fixed point $(\bar{x}_1,\bar{x}_2)\approx(0.1550,-0.1202)$. The initial conditions of the red curve, blue curve, and green curve are $(\xi_{10},\xi_{20})=(10^{-5},10^{-4})$, $(\xi_{10},\xi_{20})=(10^{-4},10^{-5})$, and $(\xi_{10},\xi_{20})=(2\times 10^{-5},2\times 10^{-4})$, respectively.}
\label{fig:ex-AM-3}
\end{figure}

We provide the graph of $||\boldsymbol{\xi}(t)||^2$ and $t^2$ near $t\approx0^+$ of the fixed point $(\bar{x}_1,\bar{x}_2)=(0,0)$ in Figure \ref{fig:ex-AM-4}, the trajectories of system \eqref{AM-3} near the fixed point $(\bar{x}_1,\bar{x}_2)=(0,0)$ in Figure \ref{fig:ex-AM-5}, and the deviation curve of $\xi_1(t)$ and $\xi_2(t)$ near the fixed point $(\bar{x}_1,\bar{x}_2)=(0,0)$ in Figure \ref{fig:ex-AM-6}. 

\begin{figure}[h]
    \centering
    \includegraphics[width=0.6\linewidth]{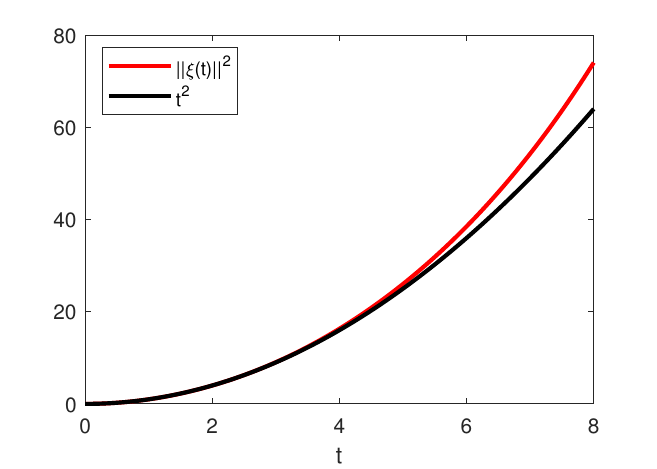}
    \caption{The curve graph of $t^2$ and $||\boldsymbol{\xi}(t)||^2$ of the corresponding deviation equations associated to system \eqref{AM-3}. The red curve is $||\boldsymbol{\xi}(t)||^2$ with the initial condition $(\xi_{10},\xi_{20}=(10^{-4},10^{-5})$.}
    \label{fig:ex-AM-4}
\end{figure}

\begin{figure}[h]
    \centering
    \includegraphics[width=0.6\linewidth]{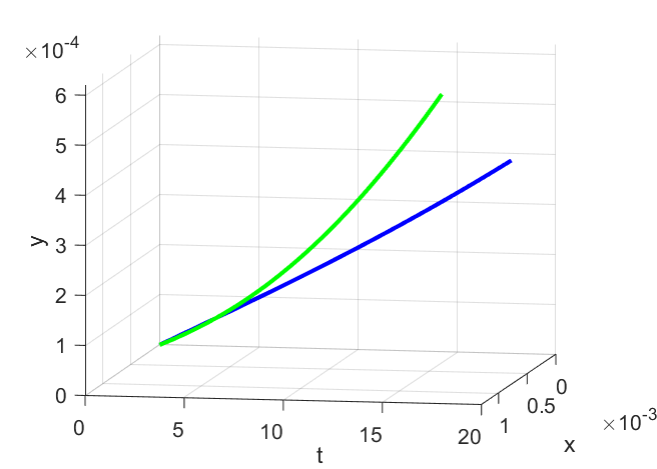}
\caption{Trajectories of system \eqref{AM-3} starting from the fixed point $(\bar{x}_1,\bar{x}_2)=(0,0)$. The initial conditions of the blue curve, and green curve are $(x_1(0),x_2(0),y_1(0),y_2(0))=(0,0,10^{-5},2\times 10^{-5})$ and  $(x_1(0),x_2(0),y_1(0),y_2(0))=(0,0,10\times 10^{-5},2\times 10^{-5})$, respectively.} \label{fig:ex-AM-5}
\end{figure}

\begin{figure}[h]
\begin{minipage}{0.49\linewidth}
    \centering
    \includegraphics[width=0.99\linewidth]{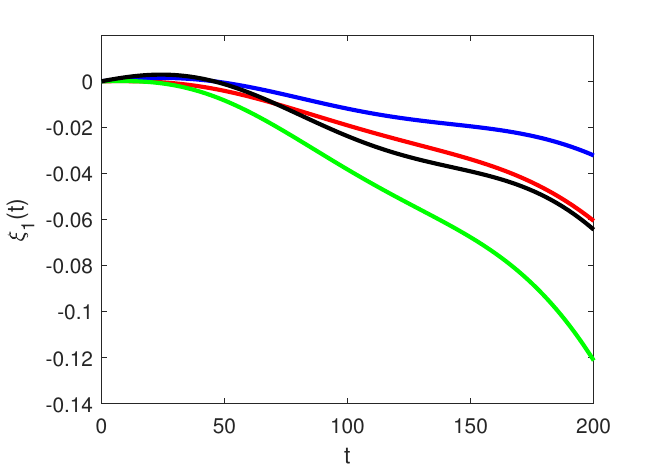}
    \caption*{(a) Deviation curve of $\xi_1(t)$.}
\end{minipage}
\begin{minipage}{0.49\linewidth}
    \centering
    \includegraphics[width=0.99\linewidth]{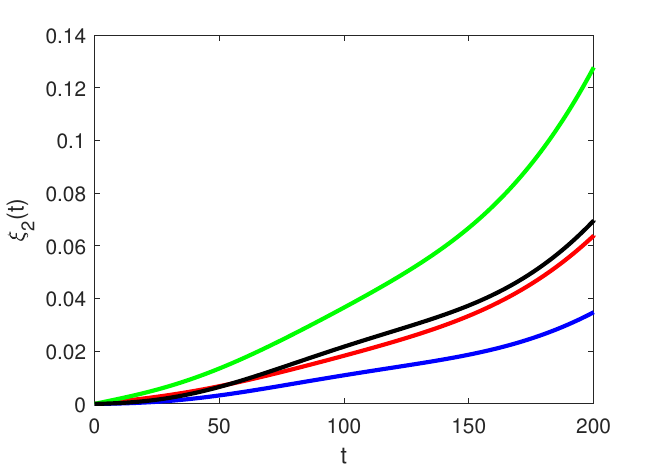}
    \caption*{(b) Deviation curve of $\xi_2(t)$.}
\end{minipage}
\caption{Deviation curve $(\xi_1(t),\xi_2(t))$ of the corresponding deviation equations associated to system \eqref{AM-3} near the fixed point $(\bar{x}_1,\bar{x}_2)=(0,0)$. The initial conditions of the red curve, blue curve, green curve, and black curve are $(\xi_{10},\xi_{20})=(10^{-5},10^{-4})$, $(\xi_{10},\xi_{20})=(10^{-4},10^{-5})$, $(\xi_{10},\xi_{20})=(2\times 10^{-5},2\times 10^{-4})$, and $(\xi_{10},\xi_{20})=(2\times 10^{-4},2\times 10^{-5})$, respectively.}
\label{fig:ex-AM-6}
\end{figure}

\subsection{Case 2}
We take the parameter values $(M_\infty,V)=\left(\frac{71}{16384},\frac{3}{16}\right)\in\mathcal{C}_2$. In this case system \eqref{AM-3} has one unique fixed point $(\bar{x}_1,\bar{x}_2)=(0,0)$. The system \eqref{AM-3} is Jacobi stable near the fixed point $(\bar{x}_1,\bar{x}_2)=(0,0)$. We provide the graph of $||\boldsymbol{\xi}(t)||^2$ and $t^2$ near $t\approx0^+$ of the fixed point $(\bar{x}_1,\bar{x}_2)=(0,0)$ in Figure \ref{fig:ex-AM-7}, the trajectories of system \eqref{AM-3} near the fixed point $(\bar{x}_1,\bar{x}_2)=(0,0)$ in Figure \ref{fig:ex-AM-8}, and the deviation curve of $\xi_1(t)$ and $\xi_2(t)$ near the fixed point $(\bar{x}_1,\bar{x}_2)=(0,0)$ in Figure \ref{fig:ex-AM-9}. 

\begin{figure}[h]
    \centering
    \includegraphics[width=0.6\linewidth]{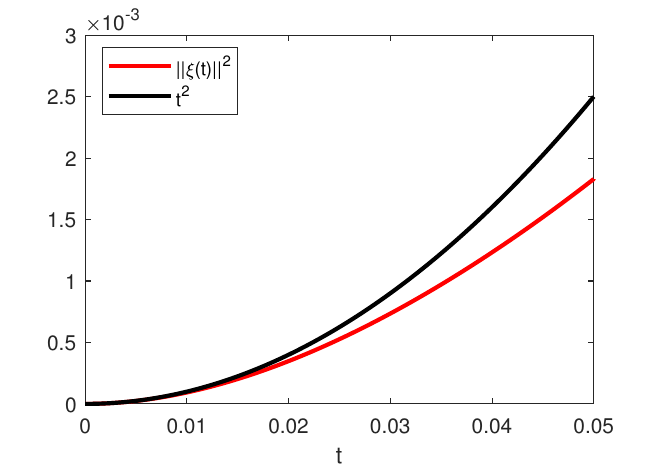}
    \caption{The curve graph of $t^2$ and $||\boldsymbol{\xi}(t)||^2$ of the corresponding deviation equations associated to system \eqref{AM-3}. The red curve is $||\boldsymbol{\xi}(t)||^2$ with the initial condition $(\xi_{10},\xi_{20}=(10^{-4},10^{-5})$.}
    \label{fig:ex-AM-7}
\end{figure}

\begin{figure}[h]
    \centering
    \includegraphics[width=0.6\linewidth]{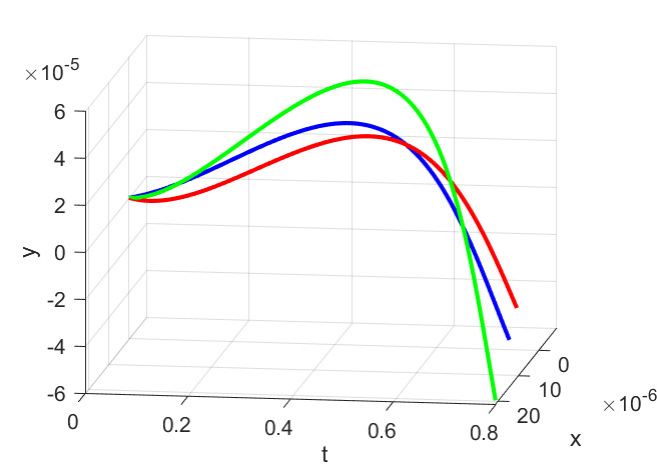}
\caption{Trajectories of system \eqref{AM-2} starting from the fixed point $(\bar{x}_1,\bar{x}_2)=(0,0)$. The initial conditions of the blue curve, red curve, and green curve are $(x_1(0),x_2(0),y_1(0),y_2(0))=(0,0,10^{-5},2\times 10^{-5})$, $(x_1(0),x_2(0),y_1(0),y_2(0))=(0,0,10\times 10^{-5},10^{-5})$, and  $(x_1(0),x_2(0),y_1(0),y_2(0))=(0,0,10\times 10^{-5},2\times 10^{-5})$, respectively.} \label{fig:ex-AM-8}
\end{figure}

\begin{figure}[h]
\begin{minipage}{0.49\linewidth}
    \centering
    \includegraphics[width=0.99\linewidth]{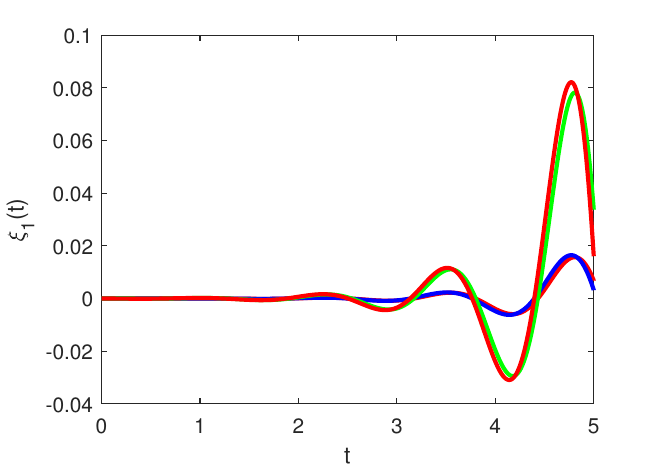}
    \caption*{(a) Deviation curve of $\xi_1(t)$.}
\end{minipage}
\begin{minipage}{0.49\linewidth}
    \centering
    \includegraphics[width=0.99\linewidth]{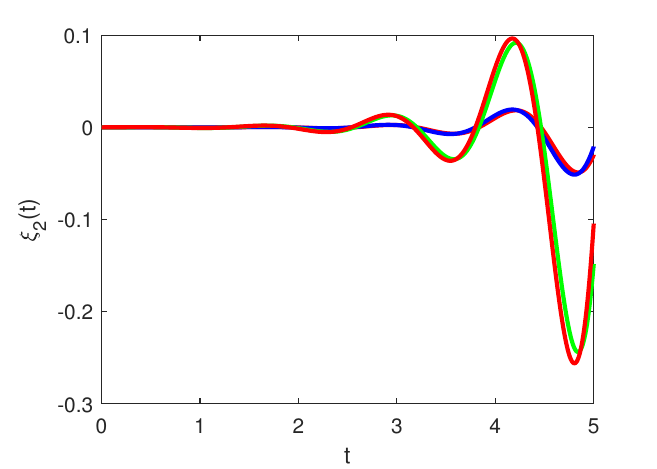}
    \caption*{(b) Deviation curve of $\xi_2(t)$.}
\end{minipage}
\caption{Deviation curve $(\xi_1(t),\xi_2(t))$ of the corresponding
deviation equations associated to system \eqref{AM-3} near the fixed point $(\bar{x}_1,\bar{x}_2)=(0,0)$. The initial conditions of the red curve, blue curve, green curve, and black curve are $(\xi_{10},\xi_{20})=(10^{-5},10^{-4})$, $(\xi_{10},\xi_{20})=(10^{-4},10^{-5})$, $(\xi_{10},\xi_{20})=(2\times 10^{-5},2\times 10^{-4})$, and $(\xi_{10},\xi_{20})=(2\times 10^{-4},2\times 10^{-5})$, respectively.}
\label{fig:ex-AM-9}
\end{figure}

\section{A 3-DOF Tractor Seat-Operator Model.}\label{sec-C}

The model equations are
\begin{equation}\label{3D}
\begin{split}
M_1\ddot{z}_1&+C_1\dot{z}_1+K_1z_1-C_2(\dot{z}_2-\dot{z}_1)-K_2(z_2-z_1)=0,\\
M_2\ddot{z}_2&+C_2(\dot{z}_2-\dot{z}_1)+K_2(z_2-z_1)-C_3(\dot{z}_3-\dot{z}_2)-K_3(z_3-z_2)=0,\\
M_3\ddot{z}_3&-C_3(\dot{z}_3-\dot{z}_2)-K_3(z_3-z_2)=0,
\end{split}
\end{equation}
where $z_1$, $z_2$, $z_3$ denote the displacement of seat mass, of operator pelvis mass, and of thorax and head mass (m). The symbols $M_1$, $M_2$, $M_3$ denote the seat mass, operator pelvis mass, and operator thorax and head mass (kg). The coefficients $K_1$,$K_2$, $K_3$ are spring constant of suspension spring, of cushion material, and of pelvis-thorax connection (N/m). The symbols $C_1$, $C_2$, $C_3$ denote the damping coefficient of suspension damper, of cushion material, and of pelvis-thorax connection (N-s/m).

\begin{table}[h]
\caption{Model parameters }\label{Tab-A}
\begin{center}
\begin{tabular}{cccccccc}
\toprule
  Cases & $M_1$ & $M_2$ & $M_3$ & $K_1$ & $K_2$ & $C_1$ & $C_2$\\ \hline
   1 & $\frac{31}{5}$ & $65\times\frac{5}{7}$ & $65\times\frac{2}{7}$ & $22600$ & $37730$ & $920$ & $159$\\ \hline
   2 & $\frac{31}{5}$ & $65\times\frac{5}{7}$ & $65\times\frac{2}{7}$ & $15000$ & $37730$ & $750$ & $159$\\ \hline
   3 & $\frac{31}{5}$ & $65\times\frac{5}{7}$ & $65\times\frac{2}{7}$ & $25000$ & $37730$ & $750$ & $159$\\ \hline
   4 & $\frac{31}{5}$ & $65\times\frac{5}{7}$ & $65\times\frac{2}{7}$ & $20000$ & $37730$ & $500$ & $159$\\ \hline
   5 & $\frac{31}{5}$ & $65\times\frac{5}{7}$ & $65\times\frac{2}{7}$ & $20000$ & $37730$ & $750$ & $159$\\ \hline
   6 & $\frac{31}{5}$ & $65\times\frac{5}{7}$ & $65\times\frac{2}{7}$ & $20000$ & $37730$ & $1000$ & $159$\\ \hline
   7 & $\frac{31}{5}$ & $36$ & $14$ & $20000$ & $37730$ & $750$ & $159$\\ \hline
   8 & $\frac{31}{5}$ & $46$ & $19$ & $20000$ & $37730$ & $750$ & $159$\\ \hline
   9 & $\frac{31}{5}$ & $57$ & $23$ & $20000$ & $37730$ & $750$ & $159$\\
   \bottomrule
\end{tabular}
\end{center}
\end{table}

The model requires cushion and suspension properties as the input variables. Different types of suspension seats and cushion types were tested in the laboratory under controlled conditions following the SAE recommended test procedures. Specifications of these seats are given in Table \ref{Tab-A} (see \cite{TP1999}). Choosing the model parameters suggested in Table \ref{Tab-A}, we obtain the following result.

\begin{theorem}\label{main-kcc-seat}
The model \eqref{3D} cannot have Jacobi stable fixed point under each of the nine cases in Table \ref{Tab-A}.
\end{theorem}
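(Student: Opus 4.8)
The plan is to apply Theorem~\ref{semi-kcc} with $k=0$: for each of the nine parameter choices in Table~\ref{Tab-A} I would produce an explicit semi-algebraic system in the two remaining free parameters $K_3,C_3$ and show it has no real solution, so that the model has zero Jacobi stable fixed points.

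First I would put \eqref{3D} into the KCC standard form. Since each of the three equations of \eqref{3D} involves only one second-order derivative, {\bf Step 1} is immediate: dividing the $i$th equation by $M_i$ gives $\ddot z_i+2G^i=0$ with each $G^i$ a \emph{linear} homogeneous function of $x=(z_1,z_2,z_3)$ and $y=(\dot z_1,\dot z_2,\dot z_3)$. Linearity of the $G^i$ makes both $G^i_{j\ell}=\partial N^i_j/\partial y_\ell$ and $\partial N^i_j/\partial x_\ell$ vanish, so \eqref{kcc-10} collapses to the \emph{constant} matrix $P^i_j=-2\,\partial G^i/\partial x_j+N^i_\ell N^\ell_j$, which I would compute by running Algorithm {\bf KCC-invariant}, obtaining entries polynomial in $K_3,C_3$. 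The coefficient matrix of $(z_1,z_2,z_3)$ in \eqref{3D} being a nonzero multiple of $K_1K_2K_3$, for $K_3\neq0$ the origin is the only solution of $G^i(\mu;x,0)=0$, i.e. the unique fixed point; and since $(P^i_j)$ is constant, the model has a Jacobi stable fixed point if and only if $(P^i_j)$ is Hurwitz.

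Next, for each of the nine cases I would form the characteristic polynomial $p(\lambda)=\lambda^3+a_1\lambda^2+a_2\lambda+a_3$ of $(P^i_j)$, whose coefficients are polynomials in $K_3,C_3$, and write the Routh--Hurwitz conditions from \eqref{kcc-main-12}: $a_3>0$, $a_1>0$, and $\Delta_2=a_1a_2-a_3>0$. Since the fixed-point equations $G^i(\mu;x,0)=0$ force $x=0$, this is a semi-algebraic system $\mathcal S$ in the two unknowns $(K_3,C_3)$ alone. It then remains to certify that $\mathcal S\wedge[K_3>0,\,C_3>0]$ is empty for every row of Table~\ref{Tab-A}; I would do this with \texttt{RealRootClassification} (equivalently, the discriminant-variety method~\cite{DLFR}), exactly as in the wound-strings example. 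One may hope for a uniform structural reason --- e.g. that one of $a_1=-\mathrm{tr}(P^i_j)$ or $a_3=-\det(P^i_j)$ is $\leq0$ throughout $K_3,C_3>0$, which already precludes a Hurwitz $(P^i_j)$ --- and if such a sign pattern holds it yields a hand-checkable proof; otherwise the emptiness is certified purely computationally.

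The main obstacle is this last feasibility step. With only two parameters and a cubic characteristic polynomial the real classification is in principle routine, but $a_2$ and $a_3$ have non-trivial degree in $K_3,C_3$ and the associated discriminant polynomials can be unwieldy, so the practical effort lies in organizing the computation case by case --- factoring $a_1$, $a_3$, and $a_1a_2-a_3$ first, then running the classification --- and confirming that all nine outputs are the empty set.
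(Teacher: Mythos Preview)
Your proposal is correct and aligns with the paper's approach. The paper does not spell out a proof of Theorem~\ref{main-kcc-seat}; it simply states the result as an output of its algorithmic pipeline (Theorem~\ref{semi-kcc} plus \texttt{RealRootClassification}), then verifies one parameter instance numerically. Your write-up is essentially that same pipeline made explicit, with two observations the paper leaves implicit: (i) linearity of the $G^i$ kills the Berwald terms so that $P^i_j=-2\,\partial G^i/\partial x_j+N^i_\ell N^\ell_j$ is a \emph{constant} $3\times3$ matrix, and (ii) the stiffness block has determinant proportional to $K_1K_2K_3$, so the origin is the unique fixed point. Both are correct and genuinely simplify the computation.

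One point worth flagging is the status of $K_3,C_3$. Table~\ref{Tab-A} lists only $M_1,M_2,M_3,K_1,K_2,C_1,C_2$, and the paper never says whether $K_3,C_3$ are meant to range over all positive values or are fixed by the cited experimental data in~\cite{TP1999}; its numerical illustration silently picks $K_3=C_3=1000$. Your reading --- treat $(K_3,C_3)$ as free positive parameters and certify emptiness of the Routh--Hurwitz region for each row --- is the stronger statement and is exactly what the paper's machinery is designed to handle, so it is the right way to proceed. If instead $K_3,C_3$ turn out to be fixed constants from~\cite{TP1999}, your argument specializes trivially: the semi-algebraic system has no unknowns and the check reduces to evaluating the signs of $a_1$, $a_3$, and $a_1a_2-a_3$ at nine points.
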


To write system \eqref{3D} in the KCC standard form, we change the notation as
\begin{equation*}
x=(x_1,x_2,x_3)=(z_1,z_2,z_3),\quad y=(y_1,y_2,y_3)=(\dot{z}_1,\dot{z}_2,\dot{z}_3).
\end{equation*}
Then system \eqref{3D} becomes:
\begin{equation}\label{3D-1}
\begin{split}
\ddot{x}_1+2G^1(x,y)=0,\quad
\ddot{x}_2+2G^2(x,y)=0,\quad 
\ddot{x}_3+2G^3(x,y)=0,
\end{split}
\end{equation}
where
\begin{equation}\label{3D-2}
\begin{split}
G^1&=\frac{1}{2}\left(\frac{K_1+K_2}{M_1}x_1-\frac{K_2}{M_1}x_2+\frac{C_1+C_2}{M_1}y_1-\frac{C_2}{M_1}y_2\right),\\
G^2&=\frac{1}{2}\left(-\frac{K_2}{M_2}x_1+\frac{K_2+K_3}{M_2}x_2-\frac{K_3}{M_2}x_3-\frac{C_2}{M_2}y_1+\frac{C_2+C_3}{M_2}y_2-\frac{C_3}{M_2}y_3\right),\\
G^3&=\frac{1}{2}\left(\frac{K_3}{M_3}x_2-\frac{K_3}{M_3}x_3+\frac{C_3}{M_3}y_2-\frac{C_3}{M_3}y_3\right).\nonumber
\end{split}
\end{equation}


Applying Algorithm 2 to system \eqref{3D-1}, we obtain the deviation equations:
\begin{equation}\label{3D-dev-1}
    \begin{split}
        \frac{d^2 \xi_1}{dt^2}&=-\frac{K_1+K_2}{M_1}\xi_1+\frac{K_2}{M_1}\xi_2-\frac{C_1+C_2}{M_1}\frac{d\xi_1}{dt}+\frac{C_2}{M_1}\frac{d\xi_2}{dt},\\
        \frac{d^2 \xi_2}{dt^2}&=\frac{K_2}{M_2}\xi_1-\frac{K_2+K_3}{M_2}\xi_2+\frac{K_3}{M_2}\xi_3+\frac{C_2}{M_2}\frac{d\xi_1}{dt}-\frac{C_2+C_3}{M_2}\frac{d\xi_2}{dt}+\frac{C_3}{M_2}\frac{d\xi_3}{dt},\\
        \frac{d^2 \xi_3}{dt^2}&=-\frac{K_3}{M_3}\xi_2+\frac{K_3}{M_3}\xi_3-\frac{C_3}{M_3}\frac{d\xi_1}{dt}+\frac{C_3}{M_3}\frac{d\xi_2}{dt}.
    \end{split}
\end{equation}


We take the parameter values $(M_1,M_2,M_3,K_1,K_2,K_3,C_1,C_2,C_3)=(\frac{31}{5},57,23,20000,37730,\\1000,750,159,1000)$. In this case system \eqref{3D-1} has a unique fixed point $(\bar{x}_1,\bar{x}_2,\bar{x}_3)=(0,0,0)$. We provide the graph of $||\boldsymbol{\xi}(t)||^2$ and $t^2$ near $t\approx0^+$ in Figure \ref{fig:ex-3D-1}, the trajectories of system \eqref{3D-1} near the fixed point $(\bar{x}_1,\bar{x}_2,\bar{x}_3)=(0,0,0)$ in Figure \ref{fig:ex-3D-2}, and the deviation curve of $\xi_1(t)$ and $\xi_2(t)$ near the fixed point $(\bar{x}_1,\bar{x}_2,\bar{x}_3)=(0,0,0)$ in Figure \ref{fig:ex-3D-3}. It follows from Figure \ref{fig:ex-3D-1} that $||\boldsymbol{\xi}(t)||^2 > t^2$ near $t\approx0^+$. According to KCC theory, this indicates that the fixed point is Jacobi unstable. 
\begin{figure}[h]
    \centering
    \includegraphics[width=0.6\linewidth]{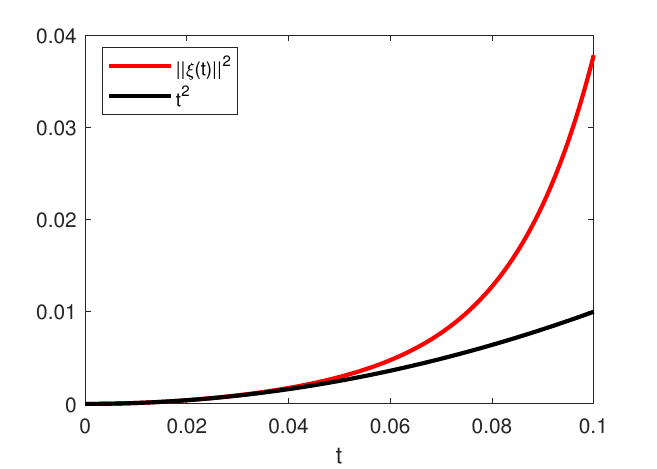}
    \caption{The curve graph of $t^2$ and $||\boldsymbol{\xi}(t)||^2$ of system \eqref{3D-dev-1}. The red curve is $||\boldsymbol{\xi}(t)||^2$ with the initial condition $(\xi_{10},\xi_{20},\xi_{30})=(10^{-5},10^{-4},10^{-4})$.}
    \label{fig:ex-3D-1}
\end{figure}

\begin{figure}[h]
    \begin{minipage}{0.33\linewidth}
    \centering
    \includegraphics[width=0.9\linewidth]{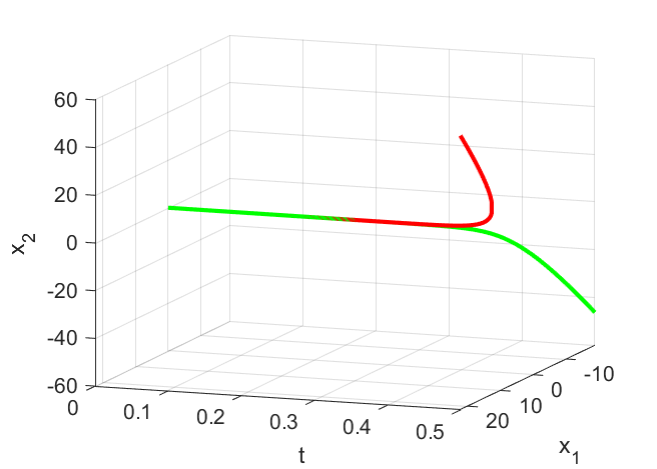}
    \caption*{(a) Projection of system \eqref{3D-1} in the $x_1-x_2$ direction.}
\end{minipage}
\begin{minipage}{0.33\linewidth}
    \centering
    \includegraphics[width=0.9\linewidth]{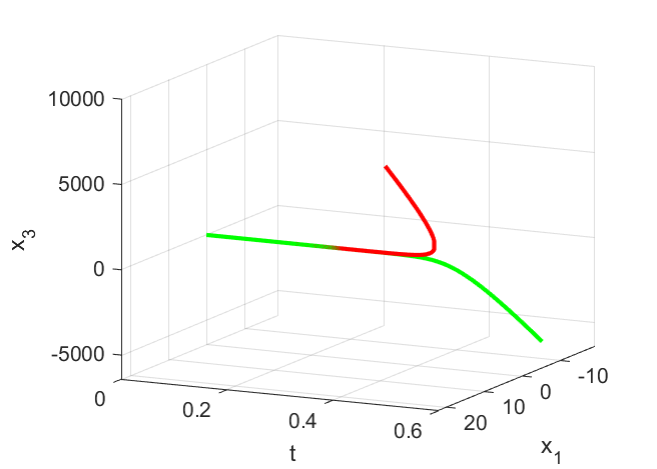}
    \caption*{(b) Projection of system \eqref{3D-1} in the $x_1-x_3$ direction.}
\end{minipage}
\begin{minipage}{0.33\linewidth}
    \centering
    \includegraphics[width=0.9\linewidth]{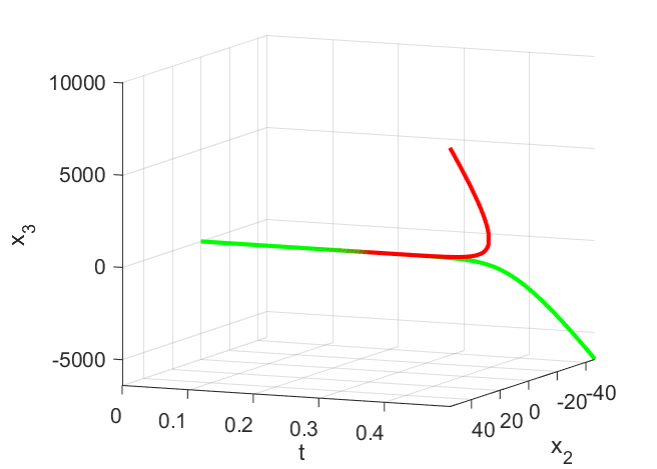}
    \caption*{(c) Projection of system \eqref{3D-1} in the $x_2-x_3$ direction.}
\end{minipage}
\caption{Trajectories of system \eqref{3D-1} starting from the fixed point $(\bar{x}_1,\bar{x}_2,\bar{x}_3)=(0,0,0)$. The initial conditions of the red curve and green curve are  $(x_1(0),x_2(0),x_3(0),y_1(0),y_2(0),y_3(0))=(0,0,0,10^{-4},10^{-5},10^{-4})$ and $(x_1(0),x_2(0),x_3(0),y_1(0),y_2(0),y_3(0))=(0,0,0,10^{-4},10^{-4},10^{-5})$, respectively.} \label{fig:ex-3D-2}
\end{figure}

\begin{figure}[h]
\begin{minipage}{0.33\linewidth}
    \centering
    \includegraphics[width=0.99\linewidth]{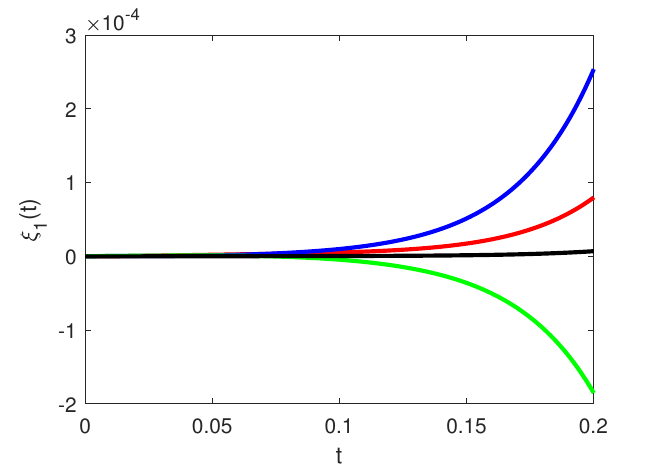}
    \caption*{(a) Deviation curve of $\xi_1(t)$.}
\end{minipage}
\begin{minipage}{0.33\linewidth}
    \centering
    \includegraphics[width=0.99\linewidth]{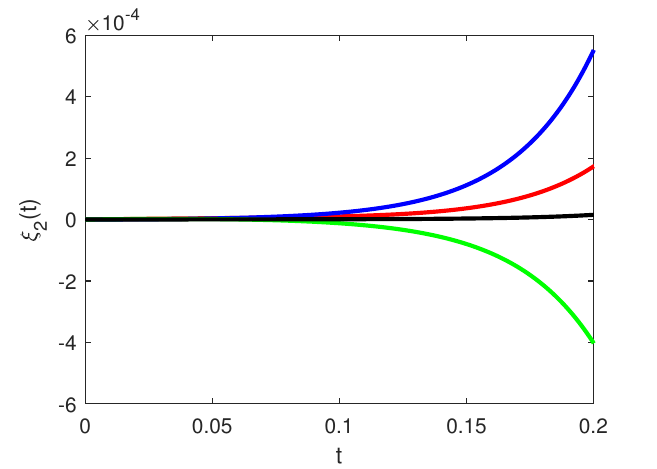}
    \caption*{(b) Deviation curve of $\xi_2(t)$.}
\end{minipage}
\begin{minipage}{0.33\linewidth}
    \centering
    \includegraphics[width=0.99\linewidth]{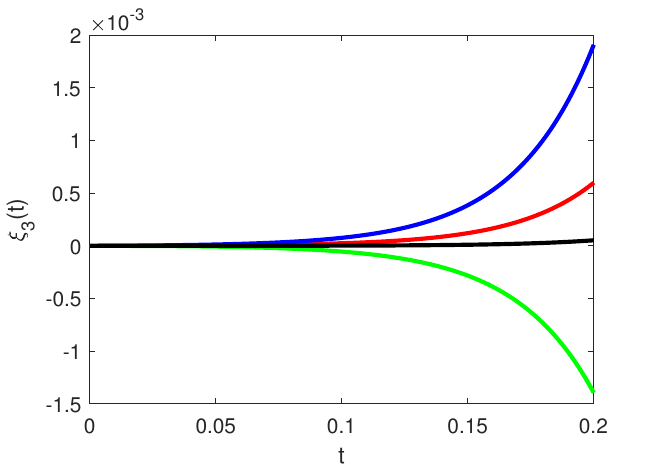}
    \caption*{(c) Deviation curve of $\xi_3(t)$.}
\end{minipage}
\caption{Deviation curve $(\xi_1(t),\xi_2(t),\xi_3(t))$ of system \eqref{3D-dev-1} near the fixed point $(x_1,x_2,x_3)=(0,0,0)$. The initial conditions of the red curve, blue curve, green curve, and black curve are $(\xi_{10},\xi_{20},\xi_{30})=(10^{-5},10^{-4},10^{-4})$, $(\xi_{10},\xi_{20},\xi_{30})=(10^{-4},10^{-5},10^{-4})$, $(\xi_{10},\xi_{20},\xi_{30})=(10^{-4},10^{-4},10^{-5})$, and $(\xi_{10},\xi_{20},\xi_{30})=(10^{-5},10^{-5},10^{-5})$, respectively.}
\label{fig:ex-3D-3}
\end{figure}

\end{document}